\newcommand{\argmin}{\mathop{\text{argmin}}}
\newtheorem{theorem}{Theorem}[section]
\newtheorem{lemma}[theorem]{Lemma}
\begin{document}
\title{{ Constellation Design for Media-based Modulation 
using Block Codes and Squaring Construction}
}

\author{Bharath Shamasundar and A. Chockalingam \\ 
Department of ECE, Indian Institute of Science, Bangalore
\thanks{This work was supported in part by the 
Visvesvaraya PhD Scheme of the Ministry of Electronics \& IT, 
Government of India, J. C. Bose National Fellowship, Department of Science and Technology,
Government of India, and the Intel India Faculty Excellence Program.}}

\maketitle

\begin{abstract}
Efficient constellation design is important for improving {\color{black}
performance}
in communication systems. The problem of multidimensional constellation 
design has been studied extensively in the literature in the context of 
multidimensional coded modulation and space-time coded MIMO systems.
Such constellations are formally called as lattice codes, where a finite 
set of points from a certain high dimensional lattice is chosen based 
on some criteria. In this paper, we consider the problem of 
constellation/signal set design for media-based modulation (MBM), a 
recent {\color{black} MIMO} channel modulation scheme with promising 
theoretical and 
practical benefits. Constellation design for MBM is fundamentally 
different from those for multidimensional coded modulation and 
{\color{black} conventional} 
MIMO systems mainly because of the inherent sparse structure of the MBM 
signal vectors. Specifically, we need a {\em structured sparse lattice 
code} with good distance properties. In this work, we show that using 
an $(N,K)$ non-binary block code in conjunction with the lattice based 
multilevel squaring construction, it is possible to systematically 
construct a signal set for MBM with certain guaranteed minimum distance. 
{\color{black} The MBM signal set obtained using the proposed construction 
is shown to achieve significantly improved bit error performance compared 
to conventional MBM signal set. In particular,} the proposed signal set is 
found to achieve higher diversity slopes in the low-to-moderate SNR regime. 
\end{abstract}
\begin{IEEEkeywords}
Media-based modulation, mirror activation pattern, MAP-index coding, 
squaring construction. 
\end{IEEEkeywords}

\section{Introduction}
\label{sec1}
{\color{black}Media-based modulation (MBM) is a recent MIMO transmission 
technique which uses a single {\color{black} transmit radio frequency (RF)} 
chain and multiple {\color{black} RF radiation elements.} It has compact 
overall structure 
compared to the conventional MIMO systems and achieves superior rate and 
performance \cite{mbm1}{\color{black}-\cite{duman}}.} 
Specifically, MBM uses digitally controlled parasitic elements called 
RF mirrors, which act as signal scatterers in the near field of the 
transmit antenna {\color{black} (see Fig. \ref{figmbm})}. Each of these 
RF mirrors can be in one of the two states, namely, ON or OFF, 
based on the control inputs which depend on the information bits. An 
RF mirror reflects the transmit signal in the ON state, and allows the 
signal to pass through in the OFF state. If there are $m_{rf}$ RF 
mirrors, then there are $N_m\triangleq 2^{m_{rf}}$ different ON/OFF 
combinations, called `mirror activation patterns' (MAP). Each of these 
MAPs creates a different near field geometry for the transmit signals. 
In a rich scattering environment, even a small perturbation in the 
near field will be augmented by random reflections, and hence results 
in a different end-to-end channel. Therefore, in MBM,  using $m_{rf}$ 
RF mirrors, $N_m$ independent channels can be created corresponding to 
$N_m$ different MAPs. These different MAPs can be represented by the 
$N_m$ MAP indices, $\mathcal{M}=\{0,1,\cdots, N_m-1\}$. The transmitter 
activates one of these MAPs (equivalently, selects one of the indices 
from $\mathcal{M}$) based on $m_{rf}$ information bits and transmits a 
symbol from a conventional modulation alphabet $\mathbb{A}$ (say, QAM), 
which conveys $\log_2|\mathbb{A}|$ bits. The achieved rate in MBM is, 
therefore, given by $\eta_{\text{\tiny MBM}}=m_{rf}+\log_2|\mathbb{A}|$ 
bits per channel use (bpcu). It has been shown that MBM can achieve good 
bit error performance in the point-to-point setting compared to 
conventional SIMO{\color{black}/MIMO systems \cite{mbm1}-\cite{mbm4}}.
In \cite{mbm6}, MBM is studied in the context 
of space-time coding and significant performance gains are reported. 
Inspired by the notion of quadrature spatial modulation, quadrature 
channel modulation schemes using MBM are proposed in \cite{mbm7},\cite{mbm8}. 
In \cite{mbm9}, MBM is used for the uplink in massive MIMO systems, and 
the possible gains in terms of reduction in the required number of 
base station receive antennas are highlighted. {\color{black} Recently, 
practical implementation of MBM using reconfigurable metasurfaces to 
alter the near-field radiation characteristics have been proposed in 
\cite{vinoy1},{\color{black} \cite{duman}}.} {\color{black} Our new 
contribution in the present work is} on designing efficient signal sets 
for MBM that can {\color{black} achieve significantly improved system 
performance}.  

\begin{figure}[t]
\centering
\includegraphics[scale=0.8]{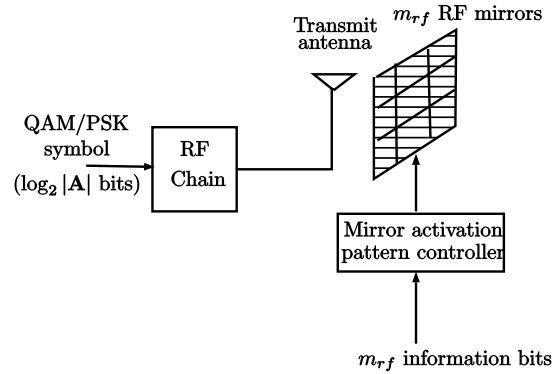}
\caption{Schematic representation of MBM transmitter.}
\vspace{-3mm}
\label{figmbm}
\end{figure}

Constellation design is one of the {\color{black} important means to improve 
performance} in wireless communication systems. In early literature on 
constellation design, several two-dimensional constellations are 
conceived based on different criteria 
\cite{constellation1}-\cite{constellation3}. 
Recognizing the limited SNR gains achievable by two-dimensional 
constellations and the possibility of higher SNR efficiency by going 
for higher dimensions, efficient multidimensional constellations are 
proposed in the literature \cite{constellation4}-\cite{constellation6}. 
The general framework of constellation design is formalized by defining 
the notion of a lattice code, and several efficient methods for 
constructing lattice codes are presented in the literature in the 
context of multidimensional coded modulation and  MIMO systems 
\cite{constellation4}-\cite{lc3}. In the present work, we consider 
the problem of constellation design for MBM, which is a recently 
proposed {\color{black} MIMO} channel modulation scheme, described 
briefly in the earlier 
part of this section. The constellation design for MBM is fundamentally 
different from those described above mainly because of the sparse 
nature of the MBM signal vectors. Although some techniques  can be 
borrowed from the previous literature, only marginal gains are possible 
if sparsity is not explicitly taken into account while designing the 
constellation. Specifically, we need good {\em structured sparse lattice 
codes} in higher dimensions to achieve high SNR efficiency in MBM. 
To this end, the contributions of this paper {\color{black} can be 
summarized} as follows. 
\begin{itemize}
\item 
{\color{black}
The problem of constellation design with constraints on the sparsity of 
signal vectors has not been addressed in the conventional MIMO literature. 
Since the sparsity arises naturally in MBM, we formulate the constellation 
design problem for MBM by explicitly considering the sparsity constraints 
to achieve improved distance properties.} 
Specifically, we consider the design of {\color{black} structured} sparse 
multidimensional constellation, where the elements of the constellation 
are the joint MBM vectors to be transmitted in $N$ channel uses, i.e., 
MBM blocks formed by concatenating $N$ MBM vectors. 

\item Next, we show that using non-binary block codes 
\cite{nonbinary} in conjunction with the lattice based constructions 
\cite{coset1},\cite{coset2}, it is possible to design constellation 
(signal set\footnote{The terms constellation and signal set are 
interchangeably used in the paper.}) for MBM with superior distance 
properties. We give one such construction of the MBM signal set using 
the notions of  {\em MAP-index coding} and {\em multilevel squaring 
construction}.  
\item We derive upper bound on the bit error rate (BER) and the 
asymptotic diversity gain achieved by MBM using the proposed signal set, 
both of which are verified by simulations. 
\item Finally, we present simulation results that demonstrate the SNR 
gain achieved by the proposed signal set. For example, an MBM system of 
rate 3 bpcu using the proposed signal set achieves a BER of $10^{-4}$ at 
an SNR of 6 dB, while the conventional MBM signal set requires 12 dB to 
achieve the same BER performance. The improved distance properties of the 
proposed signal set are also numerically demonstrated.   
\end{itemize}
 
The rest of the paper is organized as follows. Section \ref{sec2} 
provides the necessary preliminaries for the squaring construction, 
which is used in the latter section for MBM signal set design. The 
formulation of the signal set design problem, the  proposed signal 
set, and its distance properties are presented in {\color{black} Sec.} 
\ref{sec3}. The BER upper bound and the diversity analysis of MBM 
using the proposed signal set are presented in Sec. \ref{sec4}. 
Results and discussions are presented in \ref{sec5}, and conclusions 
are presented in Sec. \ref{sec6}.

\section{Preliminaries}   
\label{sec2}
In this section, we present the notions of partitions, partition chains, 
partition distance lemma, and the squaring construction 
\cite{coset1},\cite{coset2}. These notions will be used in the next 
section for the MBM constellation design. 

Consider a discrete, finite set $S$. An $M$-way {\em partition} of the 
set $S$ is specified by $M$ disjoint sets $T(b)$, such that their union 
is the set $S$. Here, $b$ is the {\em label} for the subset $T(b)$, 
which uniquely identifies the subset. This can be an integer labeling, 
in which case $b$ can take values  $0,1,\cdots,M-1$. If $M=2^K$, for 
some $K \in \mathbb{Z}_+$, then we can use binary labelings of $K$ 
bits. The partition of $S$ into $T(b)$ is denoted by $S/ T$, with the 
{\em order} of partition  $|S/T|=M$. For example, consider the set 
$S=\{-4,-3,\cdots,0,\cdots, 3,4,5\}$, which is a subset of integers. 
A two way partition of this set into odd and even integers is 
$T(0)=\{-3,-1,1,3,5\}$ and $T(1)=\{-4,-2,0,2,4\}$. The order of 
partition here is $|S/T|=2.$ 

An $m$-level partition is denoted by $S_0/S_1/ \cdots /S_m$ and is 
obtained by first partitioning $S_0$ into $S_1(b_0),b_0=0,\cdots,M_1-1$, 
then partitioning each $S_1(b_0)$ into $S_2(b_1),b_1=1,\cdots,M_2-1$, 
and so on. Here, $M_1$ is the order of the partition $S_0/S_1$, $M_2$ 
is the order of the partition $S_1/S_2$, and so on. An $m$-level partition 
is generally labeled using $m$-part label $\mathbf{b}=(b_0,b_1,\cdots,b_m)$, 
where $b_j$ is the label for the partition $S_j/S_{j+1}$. In other words, 
specifying the $m$-level partition $\mathbf{b}$ uniquely identifies a set 
from the partition $S_0/S_1/ \cdots /S_m$. Further, the subsets at $j$th 
level are identified by the first $j$ parts of the label $(b_0,\cdots,b_j)$. 
If $|S_0/S_1|=M_1$, $|S_1/S_2|=M_2, \cdots,$ $|S_{m-1}/S_{m}|=M_m$,
then the order of partition $S_0/S_m$ is the product of the orders at 
each level, i.e., $|S_0 / S_m|=M_1 M_2 \cdots M_m$. For example, 
if $|S_0/S_1|=2$ and $|S_1/S_2|=3$, then $|S_0/S_2|=2\cdot3 =6$.

Another important notion is that of a {\em distance} metric defined on a 
discrete set. If $s$ and $s'$ are two elements of $S$, then the distance 
between them is denoted by $d(s,s')$, which is equal to zero only if  
$s=s'$ and greater than zero otherwise. The {\em minimum distance} of the 
set, denoted by $d(S)$, is the minimum $d(s,s')$ for $s \neq s'$. For a 
partition $S/T$, the distance metric of $S$ carries over to its subsets 
$T(b)$. The minimum distance of the partitioned sets $T(b)$ is defined as 
the least minimum distance among $d(T(b))$. In the present work, we are 
interested in the partition $S/T$ such that $d(T) > d(S)$. If $T(b)$ and 
$T(b')$ are subsets of $S$, then the {\em minimum subset distance} 
$d(b,b')$ is equal to $d(T(b))$ if $b'=b$, otherwise, $d(b,b')$ is the 
minimum distance between the distinct elements of the subsets $T(b)$ and 
$T(b')$. We now state an important result known as the {\em partition 
distance lemma}, which gives the lower bound on the distance between any 
two subsets in an $m$-level partition chain. 

\begin{lemma}
If $S_0/S_1/\cdots /S_m$ is an $m$-level partition chain with distances 
$d(S_0)/d(S_1)/\cdots/d(S_m)$, and $S_m(\mathbf{b})$ and $S_m(\mathbf{b}')$
are subsets with multipart labels $\mathbf{b}$ and $\mathbf{b}'$, 
respectively, then the subset distance $d(\mathbf{b}, \mathbf{b}')$ is 
lower bounded by $d(S_j)$, where if $\mathbf{b} \neq \mathbf{b}'$, $j$ 
is the smallest index such that $b_j \neq b_j'$, while if $b=b'$, $j$ 
is equal to $m$.  
\end{lemma}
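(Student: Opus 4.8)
The plan is to prove the bound by a short case analysis, using only the nested structure of the chain and the definitions of minimum distance and subset distance; no analytic work is involved. First I would isolate the single structural observation that carries the argument: if the multipart labels $\mathbf{b}$ and $\mathbf{b}'$ agree in their first $j$ coordinates $(b_0,\dots,b_{j-1})$, then by the way the chain is built both $S_m(\mathbf{b})$ and $S_m(\mathbf{b}')$ are contained in the unique level-$j$ subset determined by that common prefix, say $S_j(b_0,\dots,b_{j-1})$; and if moreover $b_j \neq b_j'$, then at the next level $S_m(\mathbf{b})$ and $S_m(\mathbf{b}')$ fall into the two \emph{disjoint} subsets $S_{j+1}(b_0,\dots,b_{j-1},b_j)$ and $S_{j+1}(b_0,\dots,b_{j-1},b_j')$ of the partition $S_j/S_{j+1}$.

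If $\mathbf{b}=\mathbf{b}'$, then $j=m$ by convention and $d(\mathbf{b},\mathbf{b}')$ is, by definition, the minimum distance of the single level-$m$ subset $S_m(\mathbf{b})$. Since $d(S_m)$ is the least minimum distance over all level-$m$ subsets, $d\big(S_m(\mathbf{b})\big)\ge d(S_m)=d(S_j)$, which is the claim.

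If $\mathbf{b}\neq\mathbf{b}'$, let $j$ be the smallest index with $b_j\neq b_j'$, and take arbitrary $s\in S_m(\mathbf{b})$ and $s'\in S_m(\mathbf{b}')$. By the structural observation, $s$ and $s'$ both lie in $S_j(b_0,\dots,b_{j-1})$, and since they lie in disjoint level-$(j+1)$ subsets they are necessarily distinct elements of that level-$j$ subset. Therefore $d(s,s')\ge d\big(S_j(b_0,\dots,b_{j-1})\big)\ge d(S_j)$, where the last inequality is again just the definition of $d(S_j)$ as the least minimum distance over level-$j$ subsets. Minimizing over all such pairs $s,s'$ gives $d(\mathbf{b},\mathbf{b}')\ge d(S_j)$. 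The degenerate sub-case $j=0$ is included: there the two subsets merely sit inside $S_0$ and the bound reads $d(\mathbf{b},\mathbf{b}')\ge d(S_0)$.

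There is no genuine obstacle; the argument is bookkeeping. The one point that needs care is deducing that \emph{every} pair $s\in S_m(\mathbf{b})$, $s'\in S_m(\mathbf{b}')$ consists of distinct elements — this has to come from disjointness of the two level-$(j+1)$ subsets rather than be taken for granted — together with keeping the prefix/level indexing consistent throughout. An alternative and equally clean route is induction on $m$: peel off the top partition $S_0/S_1$, dispose of the case $b_0\neq b_0'$ directly, and in the remaining case apply the lemma to the $(m-1)$-level chain rooted at $S_1(b_0)$; I would mention this but carry out the direct argument above, which is shorter.
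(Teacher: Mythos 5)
Your proof is correct. The paper itself does not supply a proof of this lemma (it is imported from Forney's coset-codes papers and accompanied only by the remark that the bound reflects the stage at which the two subsets diverge in the partition tree), and your case analysis --- common prefix forces both subsets into the same level-$j$ set, first disagreeing label forces them into disjoint level-$(j{+}1)$ sets, hence every cross pair is distinct and at distance at least $d(S_j)$ --- is exactly a rigorous rendering of that tree-divergence argument, with the $\mathbf{b}=\mathbf{b}'$ case handled correctly from the definition of subset distance.
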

 
This result can be visualized in the form of a binary tree, where the 
multipart labels are associated with different branches of the tree and 
the distance between any two subsets depends on which stage the two 
subsets diverge in the binary tree.

In general, a distance measure is useful if it has {\em additivity 
property}. Specifically, if $S$ is a set of $N$-dimensional vectors 
$\mathbf{s} \in S$, then the additive property of the distance requires 
that the distance between any two vectors $\mathbf{s}$ and $\mathbf{s}'$ 
in $S$ is equal to the sum of distances between each of their components 
$s_i$ and $s'_i$, $i=1,\cdots,N$. It is well known that 
the squared Euclidean distance naturally has the additive property. 

We now present the idea of {\em squaring construction}, which gives a 
method of constructing new sets from a given set such that the constructed 
sets ensure a certain minimum distance greater than that of the set we 
start with. The squaring construction can be described as follows. 

If $S$ is a disjoint union of $M$ subsets $T(b)$, $b=0,\cdots,M-1$, then 
the squaring construction is defined as the union of the Cartesian product 
sets $T(b)\times T(b)=T^2(b)$, $b=0,\cdots,M-1$, i.e., 
$U=\bigcup_{b=0}^{M-1} T^2(b)$, which is denoted by $|S/T|^2$.
For example, if $S=\{ 0,1,2,3\}$, $T(0) = \{0,2\}$, and $T(1) = \{1,3\}$, 
then the squaring construction is the union of sets
\begin{align*}
U_1 &= T(0) \times T(0)  = \left\{ (0,0), (0,2), (2,0), (2,2)\right\} \\
U_2 &= T(1)\times T(1) =  \left\{ (1,1), (1,3), (3,1), (3,3)\right\},
\end{align*}
and the union is 
\begin{align*}
U &= U_1 \cup U_2  \\
&= \left\{ (0,0), (0,2), (2,0), (2,2), (1,1), (1,3), (3,1), (3,3)\right\}.
\end{align*}
Note that the set $U$ is a subset of the Cartesian product of $S$ with 
itself, i.e., $U \subset S\times S$. The following lemma gives an 
important property of such a construction. Specifically, it says that 
the squaring construction ensures an increased minimum distance 
\cite{coset2}.

\begin{lemma}
If $S/T$ is a partition with minimum distances $d(S)/d(T)$, then 
$U = |S/T|^2$ has a minimum distance of 
\begin{equation}
d(U) = \min \left[ d(T), 2d(S) \right].
\end{equation}
\end{lemma}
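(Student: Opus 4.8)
The plan is to establish the two inequalities $d(U) \ge \min[d(T),2d(S)]$ and $d(U) \le \min[d(T),2d(S)]$ separately. The only facts I need are the additivity of the distance and the elementary observation that passing to a subset cannot decrease the minimum distance, so that $d(T(b)) \ge d(S)$ for every label $b$ and hence $d(T) \ge d(S)$.

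For the lower bound, I would take two distinct elements $\mathbf{u}=(t_1,t_2)$ and $\mathbf{u}'=(t_1',t_2')$ of $U$. By the definition of the squaring construction there exist labels $b,b'$ with $t_1,t_2\in T(b)$ and $t_1',t_2'\in T(b')$, and by additivity $d(\mathbf{u},\mathbf{u}') = d(t_1,t_1') + d(t_2,t_2')$. Now split into cases. If $b=b'$, all four points lie in the single set $T(b)$; since $\mathbf{u}\ne\mathbf{u}'$ at least one coordinate pair is distinct, and each distinct coordinate pair contributes at least $d(T(b))\ge d(T)$, so $d(\mathbf{u},\mathbf{u}')\ge d(T)$. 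If $b\ne b'$, then $t_1$ and $t_1'$ belong to disjoint subsets of $S$, hence are distinct elements of $S$ with $d(t_1,t_1')\ge d(S)$, and likewise $d(t_2,t_2')\ge d(S)$, so $d(\mathbf{u},\mathbf{u}')\ge 2d(S)$. Taking the minimum over the two cases yields $d(U)\ge\min[d(T),2d(S)]$.

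For the matching upper bound I would exhibit pairs in $U$ that attain each quantity. To attain $d(T)$: pick $b$ with $d(T(b))=d(T)$ and a pair $t_1\ne t_1'$ in $T(b)$ realizing this distance, fix any $t_2\in T(b)$, and observe that $(t_1,t_2)$ and $(t_1',t_2)$ lie in $T^2(b)\subseteq U$ at distance exactly $d(T)$. To attain $2d(S)$: pick $s\ne s'$ in $S$ with $d(s,s')=d(S)$; the key point is that the partition of interest satisfies $d(T)>d(S)$, so $s$ and $s'$ cannot lie in the same $T(b)$ (that would force $d(T(b))\le d(S)<d(T)$), hence $s\in T(b)$ and $s'\in T(b')$ with $b\ne b'$, and $(s,s)\in T^2(b)$, $(s',s')\in T^2(b')$ are both in $U$ at distance $2d(S)$. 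In the degenerate case $d(T)=d(S)$ the first construction already produces a pair at distance $d(S)=\min[d(T),2d(S)]$, so it is covered as well. Combining with the lower bound gives $d(U)=\min[d(T),2d(S)]$.

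The one step that deserves care is the tightness of the $2d(S)$ term, which relies on the standing property $d(T)>d(S)$ to guarantee that every minimum-distance pair of $S$ is separated across the subsets of the partition; everything else is a short case analysis driven entirely by additivity of the distance.
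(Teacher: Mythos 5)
Your lower bound is exactly the paper's proof: the same two-case split according to whether the two distinct elements of $U$ lie in the same $T^2(b)$ (at least one coordinate differs, contributing $d(T)$) or in different ones (both coordinates differ, contributing $2d(S)$). Where you go beyond the paper is the matching upper bound: the paper's proof stops at the case analysis and implicitly treats the lower bound as an equality, whereas you actually exhibit pairs attaining $d(T)$ (two minimum-distance points of some $T(b)$ paired with a common second coordinate) and attaining $2d(S)$ (the diagonal points $(s,s)$ and $(s',s')$ for a minimum-distance pair of $S$, which necessarily fall in different subsets). Your observation that the tightness of the $2d(S)$ term needs $d(T)>d(S)$ is the right point of care, and your handling of the degenerate case $d(T)=d(S)$ closes the gap cleanly --- indeed, whenever $2d(S)$ is the strict minimum one automatically has $d(T)\ge d(S)>0$ forcing $d(T)>d(S)$, so the construction is available exactly when it is needed. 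In short: correct, same core argument, but more complete than the paper's sketch, which proves only $d(U)\ge\min\left[d(T),2d(S)\right]$.
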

\begin{proof} 
{\em Case 1:} If two distinct elements of $U$ belong to the same set 
$T^2(b)$, then they differ in at least one coordinate, and hence have 
a distance of $d(T)$. \\
{\em Case 2:} If two distinct elements of $U$ belong to 
different $T^2(b)$s, then the two elements differ in 
both the coordinates, and hence have a minimum distance 
of $d(S)$ in each coordinate and $2d(S)$ in total. 
\end{proof}
Consider the previous example with squared Euclidean distance as the 
distance measure on $S$. With this distance measure, the minimum 
distance of $S$, $d(S)=1$. The minimum distances $d(T(0))$ and $d(T(1))$ 
are both equal to 4. Further, the minimum distance of the set $U$ 
obtained by squaring construction, $d(U) = 2 = 2d(S)$. Thus, using 
squaring construction, the minimum distance is increased from one to two. 
In the process, the dimension of the elements of the set $S$, which is 
one, is also increased in $U$ to two. 

The squaring construction can be continued iteratively on the 
resulting sets to construct new sets  with their elements in higher 
dimensions having higher minimum distance. Such a construction is 
called as the {\em multilevel squaring construction} or {\em iterated 
squaring construction}. It is interesting to note that the idea of 
squaring construction can be used to construct many good codes and 
lattices, specifically, Reed-Muller codes and the Barnes-Wall lattices, 
owing to its elegant way of increasing the distances iteratively.  
We use this idea of multilevel squaring construction in the next 
section in conjunction with non-binary block codes to construct an 
MBM signal set/constellation with very good distance properties.

\section{MBM signal set design}
\label{sec3}
In this section, we briefly review MBM system and conventional 
MBM signal set. We formulate the MBM signal set design problem by 
imposing certain conditions, which when satisfied can lead to 
improved distance properties. We propose the technique of MAP-index 
coding using non-binary block codes in conjunction with multilevel 
squaring construction (discussed in the previous section) to meet 
the imposed conditions.

\vspace{-4mm}
\subsection{Conventional MBM signal set}
\label{sec3a}
Consider an MBM system with a single transmit antenna and $m_{rf}$ RF 
mirrors placed near the transmit antenna. Then, $N_m = 2^{m_{rf}}$ MAPs 
are possible.  Each of these MAPs create different end-to-end channel 
between the transmitter and the receiver. Let the $N_m$ different MAPs 
be assigned indices from the set $\mathcal{M}=\{0,1,2,\cdots,N_m-1\}$.  
An example mapping between the elements in $\mathcal{M}$ and the MAPs 
for $m_{rf}=2$ (i.e., $N_m=4$) is shown in Table \ref{tabel1}. 
\begin{table}[h]
\centering
\begin{tabular}{|c|c|c|}
\hline 
 Mirror 1 status & Mirror 2 status & MAP index \\ 
\hline 
ON& ON &  0 \\ 
\hline 
ON & OFF &  1 \\ 
\hline 
OFF & ON &  2 \\ 
\hline 
 OFF &  OFF &  3 \\ 
\hline 
\end{tabular}   
\caption{Mapping of mirror activation patterns to indices.}
\label{tabel1}
\vspace{-3mm}
\end{table}

In a given channel use, 
one of the MAPs is selected based on $m_{rf}$ information bits and 
a symbol from a conventional modulation alphabet $\mathbb{A}$ is 
transmitted using the selected MAP. Let $\mathbb{A}_0 \triangleq 
\mathbb{A} \cup \{0\}$. Then, the conventional MBM signal set, 
$\mathbb{S}_{\text{\tiny  MBM}}$, is the set of $N_m\times 1$-sized 
signal vectors given by
\begin{align}
\vspace{-2mm}
\mathbb{S}_{\text{\tiny MBM}} =&\big{\{}\mathbf{s}_{k} \in {\mathbb A}_0^{N_m}, \  \forall k \in \mathcal{M},  \nonumber \\
& \mbox{s.t.} \ \mathbf{s}_{k}= [0 \cdots 0 \underbrace{x}_{\mbox{{\scriptsize $(k+1)$th index}}}0 \cdots 0]^T, \  x \in \mathbb{A} \big{\}}.
\label{eq:ss-mbm}
\end{align}
\vspace{-2mm}
Consider two MBM signal vectors 
\begin{equation}
\mathbf{x}_1 = 
\begin{bmatrix}
0 \\ \vdots \\ s_1 \\ 0 \\ \vdots\\  0 
\end{bmatrix}, \ \ 
\mathbf{x}_2 = 
\begin{bmatrix}
0 \\s_2 \\ \vdots \\ 0 \\ \vdots \\0 
\end{bmatrix}.
\end{equation}
The squared Euclidean distance between these two MBM signal vectors is 
$|s_1|^2+|s_2|^2$ if the positions of the non-zeros $s_1$ and $s_2$
are different, i.e., if $\mathbf{x}_1$ and $\mathbf{x}_2$ have 
different MAP indices. On the other hand, if the positions of the  
non-zeros $s_1$ and $s_2$ are the same, i.e., if $\mathbf{x}_1$ and 
$\mathbf{x}_2$ have the same MAP-index, then the distance is 
$|s_1-s_2|^2$. The minimum (squared Euclidean) distance of the 
conventional MBM signal set, $d(\mathbb{S}_{\tiny \mbox{MBM}})$ is 
then given by
\begin{equation}
d(\mathbb{S}_{\tiny \mbox{MBM}}) = \min_{s_1,s_2 \in \mathbb{A}} \{|s_1|^2+|s_2|^2,|s_1-s_2|^2\}.
\label{eq:mbm_min_dist}
\end{equation}
For example, with the BPSK modulation, the minimum distance 
$d(\mathbb{S}_{\tiny \mbox{MBM}}) = 2$, irrespective of the number of 
RF mirrors used.

\vspace{-3mm}
\subsection{Efficient signal set design for MBM}
\label{sec3b}
As just illustrated, the minimum distance between the MBM signal vectors 
is decided by the modulation alphabet, irrespective of the number of RF 
mirrors used. Therefore, as such, the distance properties of MBM can not 
be improved much except for possible marginal improvements achievable by 
the constellation shaping to make the alphabet $\mathbb{A}$ near-circular 
\cite{constellation3}. 
Therefore, we now take a different approach where 
we form a new constellation with its points being the joint MBM vectors 
to be transmitted in $N$ channel uses. That is, we consider block 
transmission of MBM, where a block of $N$ MBM vectors is considered as 
the constellation point to be  transmitted in $N$ channel uses. This is 
the approach taken in \cite{constellation4}-\cite{constellation6} to 
construct good constellations in the case of conventional modulation. 
As we show in the sequel, this allows us to design improved signal sets 
for MBM with excellent distance properties. Consider two MBM blocks 
$\mathbf{x}$ and $\mathbf{x}'$, with each block formed by concatenating  
$N$ MBM signal vectors, as shown below:

\vspace{-3mm}
{\scriptsize
\begin{equation*}
\mathbf{x} =
\begin{bmatrix}
\begin{bmatrix}
0 \\ \vdots \\ s_1 \\ 0 \\ \vdots\\  0 
\end{bmatrix} 
\\
\begin{bmatrix}
0 \\ \vdots \\ s_2 \\ 0 \\ \vdots\\  0 
\end{bmatrix} 
\\
\vdots \\
\begin{bmatrix}
0 \\ \vdots \\ s_N \\ 0 \\ \vdots\\  0 
\end{bmatrix} 
\end{bmatrix}
, \ \
\mathbf{x'} =
\begin{bmatrix}
\begin{bmatrix}
0 \\  s_1' \\ \vdots  \\ 0 \\ \vdots\\  0 
\end{bmatrix} 
\\
\begin{bmatrix}
0 \\  s_2' \\ \vdots \\ 0 \\ \vdots\\  0 
\end{bmatrix} 
\\
\vdots \\
\begin{bmatrix}
0 \\ s_N' \\ \vdots  \\ 0 \\ \vdots\\  0 
\end{bmatrix} 
\end{bmatrix}.
\end{equation*}
}

It is an obvious but important fact that any two sparse vectors are 
different if they either differ in the position of the non-zeros or 
in the value of the non-zeros even in one coordinate. If $N$ MBM 
vectors are appended to form a transmission block as in $\mathbf{x}$
above, and if we consider the collection of all such blocks as the 
signal set, then the minimum distance is governed by those blocks 
which either differ in only one position or the blocks having 
the same support (non-zero positions) but differing in only one 
non-zero value, resulting in the same minimum distance as in 
\eqref{eq:mbm_min_dist}. Therefore, it should be noted that, 
just the block transmission does not result in improved 
distance properties. However, imposing certain conditions on 
the MAP indices and the non-zeros can lead to better distance 
properties as we show next.

Consider the two MBM blocks $\mathbf{x}$ and $\mathbf{x}'$ as shown 
above. It is easy to see that the following constraints ensure higher 
distance between $\mathbf{x}$ and $\mathbf{x}'$:
\begin{enumerate}
\item The MBM blocks $\mathbf{x}$ and $\mathbf{x}'$ have higher 
distance between them when their supports differ in more number of 
positions. That is, if $\mathbf{l}=(l_1,l_2,\cdots,l_N)$ and 
$\mathbf{l}'=(l_1',l_2',\cdots,l_N')$ are the MAP indices of the $N$ 
MBM vectors of $\mathbf{x}$ and $\mathbf{x}'$, respectively, then 
the distance between $\mathbf{x}$ and $\mathbf{x}'$ is increased by 
increasing the Hamming distance between $\mathbf{l}$ and $\mathbf{l}'$. 
\item If the MBM blocks $\mathbf{x}$ and $\mathbf{x}'$ have the same
support (i.e., same MAP indices), then the distance between $
\mathbf{x}$ and $\mathbf{x}'$ can be increased by increasing 
the distance between the non-zeros of $\mathbf{x}$ and $\mathbf{x}'$. 
That is, if $\mathbf{s} = [s_1 s_2 \cdots s_N]^T$ and $\mathbf{s}' = 
[s_1' s_2' \cdots s_N']^T$ denote the vectors containing the 
non-zeros of $\mathbf{x}$ and $\mathbf{x}'$, then the distance 
between $\mathbf{x}$ and $\mathbf{x}'$ can be increased by increasing 
the distance between $\mathbf{s}$ and $\mathbf{s}'$, in the case when 
$\mathbf{x}$ and $\mathbf{x}'$ have the same support.
\end{enumerate}
Note that the first condition is the result of the sparse nature of 
the MBM signal vectors, while the second condition is the one that 
is conventionally considered in the constellation design of the 
multidimensional coded modulation and space-time MIMO systems. 
In the next subsections, we show that {\em MAP-index coding} can be used to 
achieve the first condition and the {\em multilevel squaring construction}  
can be used to achieve the second condition.

\subsection{MAP-index coding}
\label{sec3c}
As noted earlier, the MAP indices are the unique indices assigned to 
different MAPs created by the different ON/OFF combinations of RF mirrors. 
For an MBM system with $m_{rf}$ RF mirrors, there are $N_m=2^{m_{rf}}$ 
different MAPs and hence $N_m$ MAP indices, which we denoted by the set 
$\mathcal{M}=\{0,1,\cdots,N_m-1\}$. The MAP index decides the position of 
the non-zero entry in each MBM vector. Therefore, the set of MAP indices 
$(l_1,l_2,\cdots,l_N)$ corresponding to the $N$ MBM vectors of the MBM 
transmission block decides the positions of $N$ non-zeros in the MBM 
block. If $\mathbf{l}=(l_1,l_2,\cdots,l_N)$ and 
$\mathbf{l}'=(l_1',l_2',\cdots,l_N')$ are the MAP index vectors of the 
MBM blocks $\mathbf{x}$ and $\mathbf{x}'$, respectively, then, as 
mentioned in the condition $1$ above, the distance between $\mathbf{x}$ 
and $\mathbf{x}'$ can be increased by increasing the Hamming distance 
between $\mathbf{l}$ and $\mathbf{l}'$. Therefore, if block codes with 
good Hamming distance properties can be suitably adopted for selecting 
the $N$ MAP indices in an MBM block, it is possible to achieve good 
distance properties between the MBM blocks. To this end, we present 
the notion of MAP-index coding \cite{micmbm}, which can be explained 
as follows.

First, the elements of the MAP index set $\mathcal{M}=\{0,1,\cdots,N_m-1\}$ 
are used as labels for the $N_m$ elements of Galois field $\text{GF}(N_m)$. 
This establishes an one-to-one mapping between the MAP indices and the 
elements of $\text{GF}(N_m)$. An example mapping is shown in Table 
\ref{mapping} for the case when $m_{rf}=3$, and hence $N_m=8$.

\begin{table}[h]
\centering
\begin{tabular}{|c|c|c|c|c|}
\hline 
 M1 status & M2 status & M3 status & MAP index & GF$(N_m)$ \\ 
\hline 
ON& ON &  ON & 0 & 0 \\ 
\hline 
ON & ON & OFF & 1 & 1 \\ 
\hline 
ON & OFF &  ON & 2 & $X$\\ 
\hline 
 ON &  OFF & OFF & 3 & $X+1$\\ 
\hline 
OFF & ON & ON & 4 & $X^2$ \\	
\hline
OFF & ON & OFF & 5 & $X^2 + 1$\\
\hline
OFF & OFF & ON & 6 & $X^2+X$\\
\hline
OFF & OFF & OFF & 7 & $X^2 + X +1$\\
\hline
\end{tabular}   
\caption{Labeling of MAP indices to elements of GF$(2^{m_{rf}})$.}
\label{mapping}
\end{table}

Then, consider an $(N,K)$ non-binary block code on $\text{GF}(N_m)$ with 
certain Hamming distance properties. The set of all $N$-length codewords 
of this $(N,K)$ block code forms a codebook on $\text{GF}(N_m)$ with 
$N_m^K = 2^{Km_{rf}}$ codewords. Since there is one-to-one mapping 
between the elements of $\text{GF}(N_m)$ and $\mathcal{M}$, the codebook 
of the considered $(N,K)$ block code on $\text{GF}(N_m)$ induces an 
equivalent codebook on $\mathcal{M}$, with the same Hamming distance 
properties as that of the original block code on $\text{GF}(N_m)$. 
Let $\mathcal{S}_{\mbox{{\scriptsize map}}}$ denote such an $(N,K)$ 
codebook on $\mathcal{M}$. Also, let $d_H$ be the minimum Hamming 
distance of $\mathcal{S}_{\mbox{{\scriptsize map}}}$. Then, any two 
codewords $\mathbf{l}=(l_1,l_2,\cdots,l_N)$ and 
$\mathbf{l}'=(l_1',l_2',\cdots,l_N')$ will differ in at least 
$d_H$ positions. Therefore, if the codebook 
$\mathcal{S}_{\mbox{{\scriptsize map}}}$ is used as the alphabet 
for $N$ MAP indices of the MBM block, then any two blocks $\mathbf{x}$ 
and $\mathbf{x}'$ having different supports will differ in at least 
$d_H$ positions, resulting in a distance of 
\begin{equation}
d(\mathbf{x}, \mathbf{x}') =  \sum_{i=1}^{d_H} {(|s_i|^2 + |s_i|'^2)}.
\label{eq:mic_mbm_min_dist}
\end{equation}
This  distance is clearly greater than the minimum distance of the 
conventional MBM signal set in \eqref{eq:mbm_min_dist} when $d_H >1$.
 
Although MAP-index coding is able to increase the distances between 
MBM blocks with different support, in order to increase the minimum 
distance of the signal set, the distance between the signal vectors 
having the same support but differing only in non-zero values should 
also be increased. Therefore, for MBM blocks $\mathbf{x}$ and 
$\mathbf{x}'$ with the same support, the non-zero symbol vectors 
$\mathbf{s}$ and $\mathbf{s}'$ corresponding to $\mathbf{x}$ and 
$\mathbf{x}'$ should be designed such that they have certain guaranteed 
minimum distance between them. This leads us to multilevel squaring 
construction, which we present next. 

\begin{figure*}[t]
\centering
\includegraphics[width=16cm, height=6cm]{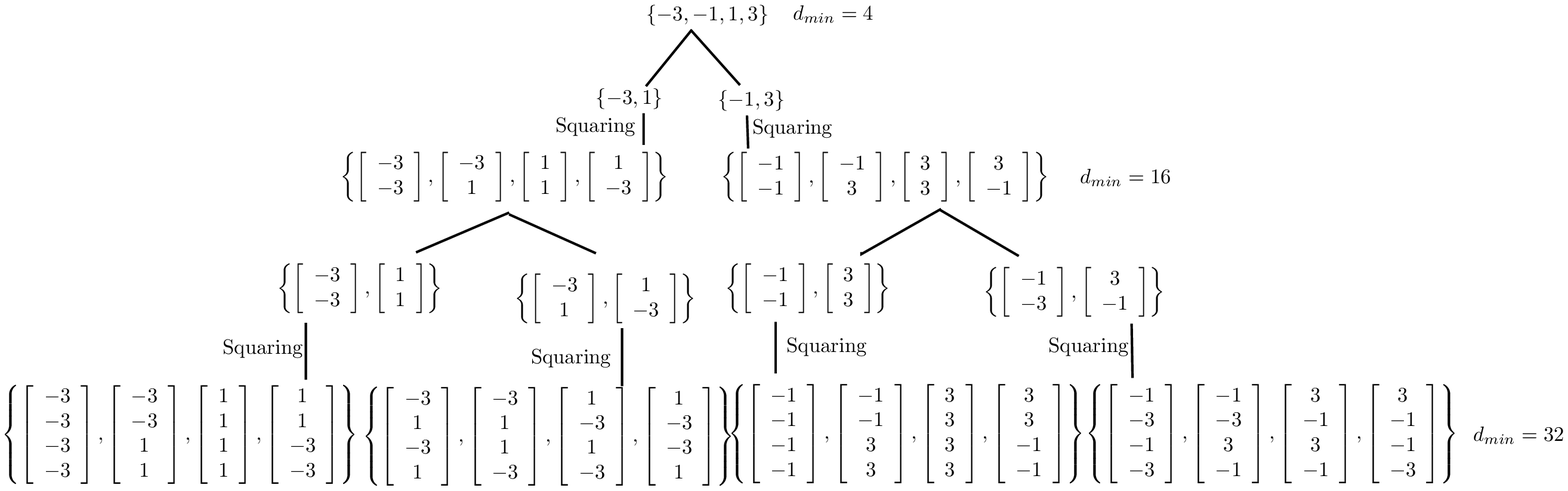}
\vspace{2mm}
\caption{Illustration of multilevel squaring construction. }
\label{fig:squaring}
\vspace{-2mm}
\end{figure*}

\subsection{Multilevel squaring construction}
\label{sec3d}

As seen in the previous subsection, when the supports of the MBM blocks 
$\mathbf{x}$ and $\mathbf{x}'$ overlap, the non-zeros of the two blocks 
$\mathbf{s}$ and $\mathbf{s}'$ should have higher distance 
($\|\mathbf{s}-\mathbf{s}'\|^2$) between them to ensure higher distance 
between $\mathbf{x}$ and $\mathbf{x}'$. As seen in Sec. \ref{sec2}, the 
squaring construction allows us to construct a set of vectors with 
certain assured minimum distance, starting from a set of scalars. 
Therefore, the squaring construction is ideally suited to construct 
signal constellations with good distance properties. In the present work, 
we use the multilevel squaring construction starting from an $M$-PAM 
alphabet and construct an $N$ dimensional signal set $\mathcal{A}$ with 
good distance properties. 

Figure \ref{fig:squaring} shows the tree representation of a two stage 
squaring construction starting from 4-PAM alphabet 
$\mathbb{A}=\{ -3,-1,1,3\}$. Each stage of the squaring construction 
consists of two steps, viz., the set partition step followed by the 
Cartesian product (which we have indicated as `squaring' in Fig. 
\ref{fig:squaring}). In each stage of the construction, we have shown 
the minimum distance between the vectors of that stage. The minimum 
distance of the original PAM signal set is $d_{min}=4$. Then, the 
squaring construction as described in Sec. \ref{sec2} is applied on 
this set to get a two dimensional signal set with a minimum distance 
of $d_{\mbox{{\scriptsize min}}}=16$. The dimension of the signal set 
is increased from one to two after the first stage of squaring 
construction. Now, the squaring construction is repeated on the 
resulting two dimensional signal set to obtain a four dimensional 
signal set with minimum distance of $d_{\mbox{{\scriptsize min}}}=32$. 
Now, if we stop the squaring construction at this stage, we have a 
constellation in four real dimensions. In order to make this signal 
set compatible with the QAM modems, we convert this four dimensional 
real vectors into two dimensional complex vectors by forming 
complex symbols using two consecutive real symbols. For example, 
the two dimensional complex vectors formed using the four dimensional 
real vectors in Fig. \ref{fig:squaring} are given by

{\footnotesize
\begin{equation*}
\mathcal{A} = \Bigg\{ 
\begin{bmatrix}
-3-3i \\ -3-3i
\end{bmatrix},
\begin{bmatrix}
-3-3i \\ 1+1i
\end{bmatrix},
\begin{bmatrix}
1+1i \\ 1+1i
\end{bmatrix},
\begin{bmatrix}
1+1i \\ -3-3i
\end{bmatrix},
\begin{bmatrix}
-3+1i \\ -3+1i
\end{bmatrix},
\end{equation*}
\begin{equation*}
\begin{bmatrix}
-3+1i \\ 1-3i
\end{bmatrix},
\begin{bmatrix}
1-3i \\ 1-3i
\end{bmatrix},
\begin{bmatrix}
1-3i \\ -3+1i
\end{bmatrix},
\begin{bmatrix}
-1-1i \\ -1-1i
\end{bmatrix},
\begin{bmatrix}
-1-1i \\ 3+3i
\end{bmatrix},
\begin{bmatrix}
3+3i \\ 3+3i
\end{bmatrix},
\end{equation*}
\begin{equation*}
\begin{bmatrix}
3+3i \\ -1-1i
\end{bmatrix},
\begin{bmatrix}
-1-3i \\ -1-3i
\end{bmatrix},
\begin{bmatrix}
-1-3i \\ -3-1i
\end{bmatrix},
\begin{bmatrix}
3-1i \\ 3-1i
\end{bmatrix},
\begin{bmatrix}
3-1i \\ -1-3i
\end{bmatrix}
\Bigg\}.
\end{equation*}
}
It should be noted that the distance properties of this set of two 
dimensional complex vectors is same as that of the set of four dimensional 
real vectors in Fig. \ref{fig:squaring}. In the present work, the 
vectors of the complex constellation $\mathcal{A}$ obtained from the 
squaring construction are used for the non-zero parts of the MBM blocks. 
This makes sure that, whenever the supports of the MBM blocks overlap, 
the distance between the MBM blocks is high.
By counting across the branches of the tree representation of the 
squaring construction, the number of vectors at the end of $L$th stage, 
starting from an $M$-PAM alphabet, is given by
\begin{equation}
|\mathcal{A}| =
\begin{cases}
 2^L \left(\frac{M^{2^L}}{2^{2(2^L-1)}} \right) \text{ if } M \geq 4 \\
  2  \ \ \ \ \ \ \ \ \ \ \ \ \ \ \ \ \text{ if } M=2.
\end{cases}
\end{equation}
Therefore, with $M=2^P$, $P \geq 2$, the number of vectors 
in $\mathcal{A}$ is given by
\begin{equation}
|\mathcal{A}| = 2^{2^L(P-2) + L+2}.
\end{equation}
It should be further noted that the $L$ level squaring construction 
results in $2^L$ dimensional real vectors and hence $N=2^L/2=2^{L-1}$ 
dimensional equivalent complex vectors.   

\subsection{Proposed signal set}
\label{sec3e}
In this subsection, we present the proposed signal set for MBM by 
putting together the ideas discussed so far.  As seen from the previous 
subsections, the MAP-index coding results in a codebook, which we 
denote by $\mathcal{C}_b$, consisting of $N$-length codewords on 
$\mathcal{M}$. This codebook can be thought of as the signal set from 
which the MAP-index vectors $\mathbf{l}=(l_1,l_2,\cdots,l_N)$ for MBM 
blocks are selected. Further, the multilevel squaring construction 
results in a multidimensional signal set $\mathcal{A}$, whose elements 
are used for the non-zero part of the MBM blocks. The proposed signal 
set for MBM is the combination of the two signal sets $\mathcal{C}_b$ and 
$\mathcal{A}$, and is given by 
\begin{equation}
\mathbb{S}=\{ \mathbf{x}=[\mathbf{x}_1^T \mathbf{x}_2^T \cdots \mathbf{x}_N^T ]^T \mbox{ s.t } (l_1,\cdots,l_N) \in \mathcal{C}_b, \ \mathbf{s} \in \mathcal{A}\},
\label{eq:prop_ss}
\end{equation}
where $\mathbf{x}_1,\cdots,\mathbf{x}_N$ are the $N$ MBM vectors which 
form the MBM block $\mathbf{x}$, $(l_1,\cdots, l_N)$ are the MAP indices 
of $\mathbf{x}_1,\cdots,\mathbf{x}_N$, respectively, and
$\mathbf{s} = [s_1,\cdots,s_N] \in \mathcal{A}$ is such that $s_i$ is the 
non-zero symbol (the source symbol) of $\mathbf{x}_i$ (the $i$th MBM 
vector of the MBM block).

{\color{black}
{\em Example:} 
For a system with $N=4$, $K=2$, $m_{rf}=3$, and $M=2$, the signal set 
$\mathcal{A}$ consisting of 2 vectors generated by squaring 
construction is

\vspace{-3mm}
{\small
\begin{equation}
\mathcal{A} = \left\{ 
\begin{bmatrix}
-1-1i \\ -1-1i \\ -1-1i\\ -1-1i
\end{bmatrix}, 
\begin{bmatrix}
1+1i \\ 1+1i \\ 1+1i\\ 1+1i
\end{bmatrix}
\right\},
\end{equation}
}

\vspace{-3mm}
\hspace{-5.5mm}
and the MAP-index codebook $\mathcal{C}_b$ generated by $(4,2)$ shortened 
Reed-Solomon code on GF(8), consisting of $2^{Km_{rf}}=2^6=64$ codewords, 
is given by 

\vspace{-1mm}
{\small
\begin{equation}
\mathcal{C}_b = \left\{ 
\begin{bmatrix}
0 \\0 \\0 \\0
\end{bmatrix}, 
\begin{bmatrix}
0 \\ 1 \\ 6 \\ 3
\end{bmatrix},
\begin{bmatrix}
0 \\ 2 \\ 7 \\ 6
\end{bmatrix},
\begin{bmatrix}
0 \\ 3 \\ 1 \\ 5
\end{bmatrix},
\begin{bmatrix}
0 \\ 4 \\ 5 \\ 7
\end{bmatrix},
\dots,
\begin{bmatrix}
7 \\ 7 \\ 3 \\ 5
\end{bmatrix}
\right\}.
\end{equation}
}

\hspace{-5.5mm}
Each vector in $\mathcal{C}_b$ is a MAP-index vector, whose entries are used 
as the MAP-indices for the  4 MBM sub-vectors which constitute the MBM block.
The MBM block is then formed by transmitting one of the vectors from 
$\mathcal{A}$ using one of the MAP-index vectors from $\mathcal{C}_b$. For 
example, if the first vector of $\mathcal{C}_b$ is used as the MAP-index 
vector and the first vector in $\mathcal{A}$ is to be transmitted, then the 
MBM transmit block corresponding to this combination is a $NN_m=32$-length 
signal vector given by 
\begin{align}
\mathbf{x} = [&(-1-i)\ 0\ 0\ 0\ 0\ 0\ 0\ 0\ (-1-i)\ 0\ 0\ 0\ 0\ 0\ 0\ 0\  \nonumber  \\ 
&(-1-i)\ 0\ 0\ 0\ 0\ 0\ 0\ 0\ (-1-i)\ 0\ 0\ 0\ 0\ 0\ 0\ 0]^T.
\label{eq:ex_sigvec1}
\end{align}

Likewise, if the second vector of $\mathcal{C}_b$ is used as the MAP-index
vector and the second vector in $\mathcal{A}$ is to be transmitted, then the
signal vector is given by
\begin{align}
\mathbf{x} = [&(1+i)\ 0\ 0\ 0\ 0\ 0\ 0\ 0\ 0\ (1+i)\ 0\ 0\ 0\ 0\ 0\ 0\ \nonumber \\
& 0\ 0\ 0\ 0\ 0\ 0\ (1+i)\ 0\ 0\ 0\ 0\ (1+i)\ 0\ 0\ 0\ 0]^T.
\label{eq:ex_sigvec2}
\end{align}
The proposed MBM constellation $\mathbb{S}$ is the set of all such MBM 
blocks obtained by different combinations of MAP-index vectors from 
$\mathcal{C}_b$ and symbol vectors from $\mathcal{A}$.
}

The number of MBM blocks in $\mathbb{S}$ (i.e., the number of signal 
points in the proposed constellation) is 
$|\mathbb{S}|=|\mathcal{C}_b||\mathcal{A}|$. As seen before, 
$|\mathcal{C}_b|=N_m^K = 2^{Km_{rf}}$ and 
$|\mathcal{A}|=2^{2^L(P-2)+L+2}$ for $P\geq 2$, and $|\mathcal{A}|=2$ 
for $P=1$.
Therefore, the rate achieved by transmitting an MBM block from the proposed 
signal set $\mathbb{S}$ for $P\geq 2$ is given by
\begin{align}
\eta &= \frac{1}{N}\left\{ \log_2\left(2^{Km_{rf}} \cdot 2^{2^L(P-2) + L+2}\right) \right\} \nonumber \\
&=\frac{1}{N}\left[Km_{rf} + 2^{L}(P-2) + L+2\right] \nonumber \\  
 &=\frac{1}{N} \bigg[ Km_{rf} + 2N(\log_2M-2) + \log_2(2N) + 2 \bigg]. 
\label{eq:rate_highM} 
\end{align}
For the case when $ P=1$, the rate is given by  
\begin{align}
\eta &=\frac{1}{N}\left\{ \log_2 \left(2^{Km_{rf}} \cdot 2\right) \right\} \nonumber \\
&=\frac{1}{N}\left[Km_{rf} + 1\right].  
\end{align}
For example, if $N=4$, $K=2$, $m_{rf}=2$, $M=4$, the achieved rate is 
\begin{align*}
\eta &= \frac{1}{4}\bigg[ 2\cdot 2 + 2\cdot 4(\log_24 - 2) + \log_28+ 2 \bigg] \nonumber \\
&=\frac{9}{4} = 2.25 \ \ \ \text{bpcu}.
\end{align*}
Further, the minimum distance of the signal set in \eqref{eq:prop_ss} is given by
\begin{equation}
d(\mathbb{S}) = \min_{\mathbf{s}, \mathbf{s}' \in \mathcal{A} } \left\{ \sum_{i=1}^{d_H} {(s_i^2 + s_i'^2)} ,
\|\mathbf{s} - \mathbf{s}'\|^2 \right\},
\end{equation}
where the first term inside the brackets is the distance between 
the blocks when the supports are non-overlapping and the second term 
corresponds to the case when the supports are overlapping. 

\subsection{The received signal} 
In this subsection, we present the expression for the received signal when 
an MBM block from the proposed signal set \eqref{eq:prop_ss} is transmitted.
We assume a frequency flat Rayleigh fading channel which remains constant 
for the duration of $N$ channel uses. An MBM block  transmitted in $N$ 
channel uses can be written in the matrix form as 
an $N_m \times N$ matrix 
$\mathbf{X} = [\mathbf{x}_1 \cdots \mathbf{x}_N]$, 
such that $\text{vec}(\mathbf{X}) = \mathbf{x} \in \mathbb{S}$. 
The received $n_r \times N$ matrix in $N$ channel uses is then given by 
\begin{equation}
\mathbf{Y} = \mathbf{HX} + \mathbf{N},
\label{eq:rxd_matrix}
\end{equation}
where $\mathbf{H} \in \mathbb{C}^{n_r \times N_m}$ is the 
MBM channel matrix, whose entries are assumed i.i.d 
$\mathcal{CN}(0,1)$ and $\mathbf{N} \in \mathbb{C}^{n_r \times N}$ 
is the additive white Gaussian noise matrix with its entries being i.i.d 
$\mathcal{CN}(0, \sigma^2)$. The received signal in 
\eqref{eq:rxd_matrix} can be written in vector form as 
\begin{equation}
\mathbf{y}=(\mathbf{I} \otimes \mathbf{H}) \mathbf{x} + \mathbf{n},
\end{equation}
where $\mathbf{y} = \text{vec}(\mathbf{Y})$, 
$\mathbf{x} = \text{vec}(\mathbf{X}) \in \mathbb{S}$, 
and $\mathbf{n} = \text{vec}(\mathbf{N})$. 
The maximum-likelihood (ML) detection rule for  
signal detection  with the proposed signal set is 
\begin{equation}
\widehat{\mathbf{x}} = \argmin_{\mathbf{x} \in \mathbb{S}} \|\mathbf{y} - (\mathbf{I} \otimes \mathbf{H}) \mathbf{x} \|^2.
\label{eq:MLD}
\end{equation}

\section{BER and asymptotic diversity analyses}
\label{sec4}
From \eqref{eq:prop_ss} and \eqref{eq:rxd_matrix}, it can be seen 
that there is certain dependence in time among the transmit MBM 
vectors of an MBM block and that the transmit block can be written 
in the form of an $N_m\times N$ space time codeword. Therefore, it 
is of interest to study if the proposed signal set can provide any 
diversity  gain apart from improved distance properties. To this end, 
in this section, we carry out the BER and asymptotic diversity analyses 
of MBM using the proposed signal set.  
 
Consider two MBM blocks $\mathbf{X}^{(i)}$ and $\mathbf{X}^{(j)}$, 
such that $\text{vec}(\mathbf{X}^{(i)})$, $\text{vec}(\mathbf{X}^{(j)}) 
\in \mathbb{S}$. The pairwise error probability (PEP) between   
$\mathbf{X}^{(i)}$ and $\mathbf{X}^{(j)}$, given the channel 
$\mathbf{H}$, is the probability that $\mathbf{X}^{(i)}$ is transmitted 
and it is detected as $\mathbf{X}^{(j)}$ at the receiver, with the 
channel being known to the receiver. This probability is given by
\begin{align}
\text{PEP}_{|\mathbf{H}} (\mathbf{X}^{(i)}, \mathbf{X}^{(j)}) &= \text{Pr}\{\mathbf{X}^{(i)} \rightarrow 
\mathbf{X}^{(j)} | \mathbf{H} \} \nonumber \\
&=\text{Pr}\{\|\mathbf{Y} - \mathbf{H}\mathbf{X}^{(i)} \|^2_F \leq 
\|\mathbf{Y} - \mathbf{H}\mathbf{X}^{(j)} \|^2_F \}, 
\label{eq:PEP1}
\end{align}
where 
$\|\cdot\|^2_F$ denotes the Frobinius norm of a matrix. The PEP in 
\eqref{eq:PEP1} can be simplified as \cite{stbc1},\cite{stbc2}
\begin{equation}
\text{PEP}_{|\mathbf{H}} (\mathbf{X}^{(i)}, \mathbf{X}^{(j)}) = Q(\sqrt{\rho D_{ij}/2}),
\label{eq:PEP2}
\end{equation}
where $Q(\cdot)$ denotes the Q-function, 
$D_{ij} \triangleq \|\mathbf{H}(\mathbf{X}^{(i)}-\mathbf{X}^{(j)})\|^2_F$,
and $\rho$ is the signal-to-noise ratio (SNR) per receive branch at the 
receiver. The PEP in \eqref{eq:PEP2} can be upper bounded as 
\begin{equation}
\text{PEP}_{|\mathbf{H}} (\mathbf{X}^{(i)}, \mathbf{X}^{(j)}) \leq \frac{1}{2}e^{-\rho D_{ij}/4},
\end{equation}
where we have used the inequality $Q(x) \leq \frac{1}{2}e^{-x^2/2}$. 
For i.i.d Gaussian channels, unconditioning the PEP over the 
channel results in the following inequality \cite{stbc1},\cite{stbc2}
{\small
\begin{align}
\text{PEP}(\mathbf{X}^{(i)}, \mathbf{X}^{(j)})\hspace{-0.5mm} &\leq  
\hspace{-0.5mm}\frac{1}{2} \hspace{-1mm}\left( \hspace{-1mm} \frac{1}{\text{det}\left(\mathbf{I}_{N_m}\hspace{-1.0mm} + \hspace{-0.5mm}
\frac{\rho}{4} (\mathbf{X}^{(i)} \hspace{-0.5mm} - \hspace{-0.5mm} \mathbf{X}^{(j)})
(\mathbf{X}^{(i)} \hspace{-0.5mm} - \hspace{-0.5mm} \mathbf{X}^{(j)})^H \right) } \hspace{-1mm}\right)^{\hspace{-1mm}n_r}\nonumber \\
&=\frac{1}{2}\left( \frac{1}{\prod_{r_{ij}=1}^{R_{ij}} (1+\sigma_{r_{ij}}^2\rho/4)} \right)^{\hspace{-1.0mm}n_r},
\label{eq:PEP3}
\end{align}}

\vspace{-1mm}
\hspace{-4.75mm}
where $\mathbf{I}_{N_m}$ is the $N_m \times N_m$ identity matrix, 
$\sigma_{r_{ij}}$ is the $r_{ij}$th singular value of $(\mathbf{X}^{(i)} - 
\mathbf{X}^{(j)})$, and $R_{ij}$ is its rank.
The union bound based BER upper bound for the proposed 
signal set is then given by
{\small
\begin{align}
\text{BER} &\leq \frac{1}{|\mathbb{S}|} \sum_{i=1}^{|\mathbb{S}|} 
\sum_{j=1, j\neq i}^{|\mathbb{S}|} \text{PEP}(\mathbf{X}^{(i)}, \mathbf{X}^{(j)}) \frac{d(\mathbf{X}^{(i)},\mathbf{X}^{(j)})}{\kappa} \nonumber \\
&=\frac{1}{|\mathbb{S}|} \sum_{i=1}^{|\mathbb{S}|} 
\sum_{j=1, j\neq i}^{|\mathbb{S}|}\hspace{-1mm} \frac{1}{2}\hspace{-1mm}\left( \frac{1}{\prod_{r_{ij}=1}^{R_{ij}} (1+\sigma_{r_{ij}}^2\rho/4)} \right)^{\hspace{-1mm}n_r} \hspace{-2mm} \frac{d(\mathbf{X}^{(i)},\mathbf{X}^{(j)})}{\kappa},
\label{eq:BER}
\end{align}}

\vspace{-2mm}
\hspace{-4.75mm}
where $d(\mathbf{X}^{(i)},\mathbf{X}^{(j)})$ is the Hamming distance 
between the bit mappings of $\mathbf{X}^{(i)}$ and $\mathbf{X}^{(j)}$, 
and $\kappa=\log_2|\mathbb{S}|$.\\

\begin{theorem}
The asymptotic diversity order of MBM using the proposed signal set is $n_r$.
\label{theorem1}
\end{theorem}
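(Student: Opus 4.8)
The plan is to reduce the diversity computation to the minimum rank of the codeword difference matrices and then show that this minimum rank equals one. Starting from the unconditioned pairwise error bound \eqref{eq:PEP3}, at high SNR each factor $1+\sigma_{r_{ij}}^2\rho/4$ grows linearly in $\rho$, so $\text{PEP}(\mathbf{X}^{(i)},\mathbf{X}^{(j)})$ decays as $\rho^{-n_r R_{ij}}$, where $R_{ij}$ is the rank of $\mathbf{X}^{(i)}-\mathbf{X}^{(j)}$. Feeding this into the union bound \eqref{eq:BER} and keeping the slowest-decaying term, the BER decays as $\rho^{-n_r R_{\min}}$ with $R_{\min}\triangleq\min_{i\neq j}R_{ij}$; hence the asymptotic diversity order is $n_r R_{\min}$, and it remains only to prove $R_{\min}=1$.

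The core of the argument is exhibiting a pair of distinct MBM blocks whose difference matrix has rank one, which is exactly where the sparsity of the MBM vectors works against transmit diversity. Since $\mathcal{C}_b$ is (the relabeling of) a linear $(N,K)$ block code, it contains the all-zero MAP-index vector $\mathbf{l}=(0,\dots,0)$. Pair this single MAP-index vector with two distinct symbol vectors $\mathbf{s}\neq\mathbf{s}'$ drawn from $\mathcal{A}$ (possible because $|\mathcal{A}|\geq 2$), and call the resulting blocks $\mathbf{X},\mathbf{X}'$. By construction, every non-zero entry of both $\mathbf{X}$ and $\mathbf{X}'$ sits in the first row (the row indexed by MAP index $0$), so $\mathbf{X}-\mathbf{X}'$ is a non-zero matrix supported on a single row and therefore has rank exactly one. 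Together with the trivial observation that distinct blocks give a non-zero, hence rank-$\geq 1$, difference, this gives $R_{\min}=1$, so $\text{BER}\leq c\,\rho^{-n_r}$ for some constant $c$, i.e., the diversity order is at least $n_r$.

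For the matching converse I would lower-bound the block error probability by the probability of confusing precisely the rank-one pair $(\mathbf{X},\mathbf{X}')$ constructed above: up to the $1/|\mathbb{S}|$ prior this probability is at least $\tfrac12\,\text{PEP}(\mathbf{X},\mathbf{X}')$, which by \eqref{eq:PEP3} behaves as $\Theta(\rho^{-n_r})$ since only one singular-value factor appears. Hence the diversity order is at most $n_r$, and combining the two bounds yields exactly $n_r$. The one place that needs care is the claim $R_{\min}=1$: one might hope that the Hamming-distance guarantee from MAP-index coding, or the Euclidean-distance gain from the squaring construction, would also raise the rank, but neither does, because any two blocks that share a constant MAP-index codeword differ only within a single row. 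I would close by remarking that the asymptotic diversity therefore stays at $n_r$, and that the gains observed at low-to-moderate SNR are a coding-gain/distance effect (a steeper \emph{diversity slope}) rather than an increase in asymptotic diversity order.
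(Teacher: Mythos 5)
Your proof is correct and follows essentially the same route as the paper: reduce the diversity order to $n_r\min_{i\neq j}R_{ij}$ via the rank criterion, then exhibit a rank-one difference matrix by pairing the all-zero MAP-index codeword (present in any linear $(N,K)$ codebook, with the zero field element mapped to MAP index $0$) with two distinct symbol vectors from $\mathcal{A}$, so that both blocks are supported on a single row. Your added converse via a pairwise lower bound on the error probability is a small rigor bonus the paper leaves implicit by citing the standard rank criterion, but it does not change the substance of the argument.
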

\begin{proof}
At high SNR values, the PEP in \eqref{eq:PEP3} can be simplified as 
\begin{equation}
\text{PEP}(\mathbf{X}^{(i)}, \mathbf{X}^{(j)}) \leq \left(\frac{\rho}{4}\right)^{-n_r R_{ij}} \left( \left( \prod_{r_{ij}=1}^{R_{ij}}\hspace{-1mm}\sigma_{r_{ij}}^2 \right)^{\hspace{-1mm}1/R_{ij}} \right)^{\hspace{-1mm}-n_rR_{ij}}\hspace{-1mm}.
\label{eq:PEP4}
\end{equation}
From \eqref{eq:PEP3}, the asymptotic diversity order of MBM 
using the proposed signal set is given by \cite{stbc1}
\begin{equation}
g_d = n_r \min_{i, j\neq i}R_{ij},
\label{eq:div1}
\end{equation} 
where  $R_{ij}$ is the rank of the difference matrix 
$\Delta^{ij} = \mathbf{X}^{(i)} - \mathbf{X}^{(j)}$, where $\text{vec}(\mathbf{X}^{(i)})$, 
$\text{vec}(\mathbf{X}^{(j)}) \in \mathbb{S}$. The matrices 
$\mathbf{X}^{(i)}$ and $\mathbf{X}^{(j)}$ are $N_m\times N$ matrices 
with a single non-zero entry per column. The positions of the $N$ 
non-zero entries are together determined by the MAP-index codeword, as 
discussed before. From \eqref{eq:div1}, the diversity order is determined 
by the minimum $R_{ij}$ among all $i, j\neq i$. Therefore, if we find two 
matrices $\mathbf{X}^{(i)}$ and $\mathbf{X}^{(j)}$ such that their difference 
$\Delta^{ij} = \mathbf{X}^{(i)} - \mathbf{X}^{(j)}$ has the minimum rank 
among all pairs of matrices, then that pair determines the asymptotic 
diversity order. To find such a pair, we note that any $(N,K)$ codebook 
will contain an all-zero codeword. If the zero element of GF$(2^{m_{rf}})$ 
is mapped to the MAP index `0', then the MAP-index codeword corresponding 
to the all-zero codeword is also all-zeros. This means that same MAP 
is used in all the $N$ channel uses. Consider two such transmission blocks
\begin{equation*}
X^{(i)} =
\begin{bmatrix}
s_1^i & s_2^i & \cdots & s_N^i \\
0 & 0 & \cdots & 0\\
\vdots & & \cdots & \vdots\\
 0 & 0 & \cdots & 0\\
\end{bmatrix},
X^{(j)} =
\begin{bmatrix}
s_1^j & s_2^j & \cdots & s_N^j \\
0 & 0 & \cdots & 0\\
\vdots & & \cdots & \vdots\\
 0 & 0 & \cdots & 0\\
\end{bmatrix}.
\end{equation*}
Then, their difference matrix 
\begin{equation}
\Delta^{ij} =
\begin{bmatrix}
\delta_1^{ij} & \delta_2^{ij} & \cdots & \delta_N^{ij} \\
0 & 0 & \cdots & 0\\
\vdots & & \cdots & \vdots\\
 0 & 0 & \cdots & 0\\
\end{bmatrix},
\label{eq:rank1delta}
\end{equation}
where $\delta_k^{ij} = s_k^{i} - s_k^j$, clearly has rank one. 
Therefore, $\min_{i,j\neq i} R_{ij}=1$, and hence the asymptotic 
diversity order of MBM with the proposed signal set is $n_r$. 
\end{proof}
The above result says that, although there is certain coding across time, 
the proposed signal set does not achieve transmit diversity. However, 
as we will see in the next section, MBM with the proposed signal set 
exhibits a diversity slope higher than $n_r$ in the medium SNR regime 
and the asymptotic diversity order of $n_r$ is observed only at extremely 
low BER values.

\section{Results and discussions}
\label{sec5}
In this section, we present numerical and simulation results that 
illustrate that MBM with the proposed signal set achieves improved 
distance properties resulting in good bit error performance. We 
also show that the BER upper bound derived in the previous section 
closely matches the simulated BER at high SNR values. We use this 
bound to verify the analytical diversity order derived in the 
previous section. 
  
Figure \ref{fig:ML1} shows the BER performance of MBM with the 
proposed signal set based on MAP-index coding (MIC) and squaring 
construction (SQ), which is abbreviated in the figure as MIC-SQ-MBM. 
The system considered in the figure uses $N=4$, $K=2$, $m_{rf}=4$, 
$n_r=4$, and achieves a rate of 2.25 bpcu. For MAP-index coding, the 
codebook of $(4,2)$ shortened Reed-Solomon code on $\text{GF}(2^4)$ 
is used and eight level ($2N=8$) squaring construction is achieved 
starting from $M=2$-PAM alphabet. The figure also shows the performance 
of conventional MBM signal set with rate 2 bpcu. The upper bounds on 
the BER for both the systems are also shown. From the figure 
it can be seen that the derived BER upper bound is close to the 
simulated BER at high SNR values. {\color{black} This is because the 
bound on the Q-function used for deriving the upper bound on the BER
is tight for higher values of SNR.} Further, it can be seen that 
the proposed signal set achieves superior bit error performance 
compared to conventional MBM signal set. For example, the proposed 
signal set has an SNR gain of about 7 dB at a BER of $10^{-5}$ 
compared conventional MBM signal set. 
A similar performance gain in favor of the proposed signal set is
observed in Fig. \ref{fig:ML2} for another set of parameters. In
Fig. \ref{fig:ML2}, the proposed signal set uses $N=4$, $K=2$, $m_{rf}=6$, 
$n_r=4$ and achieves a rate of 3.25 bpcu. For MAP-index coding, the 
codebook of $(4,2)$ shortened Reed-Solomon code on $\text{GF}(2^6)$ 
is used and eight level squaring construction is achieved starting 
from $M=2$-PAM alphabet. The performance of this MIC-SQ-MBM signal
set is compared with that of conventional MBM signal set with rate 
3 bpcu.

\begin{figure}[t]
\centering
\includegraphics[width=9cm, height=6cm]{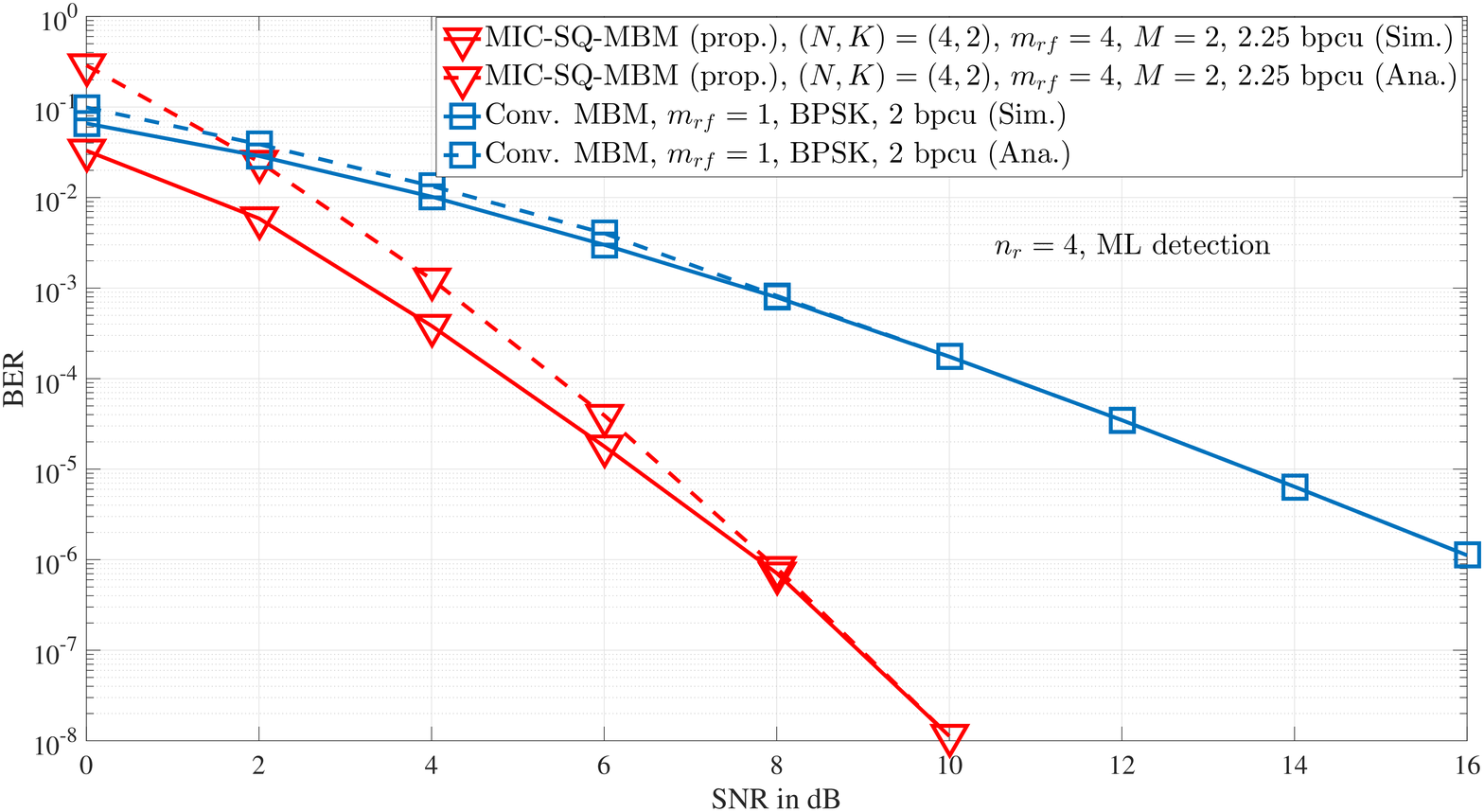}
\caption{{\color{black} BER performance of MIC-SQ-MBM (prop.) signal set 
with rate 2.25 bpcu and conventional MBM signal set with rate 2 bpcu. 
Simulation and analysis.}}
\label{fig:ML1}
\vspace{-3mm}
\end{figure}

\begin{figure}[t]
\centering
\includegraphics[width=9cm, height=6cm]{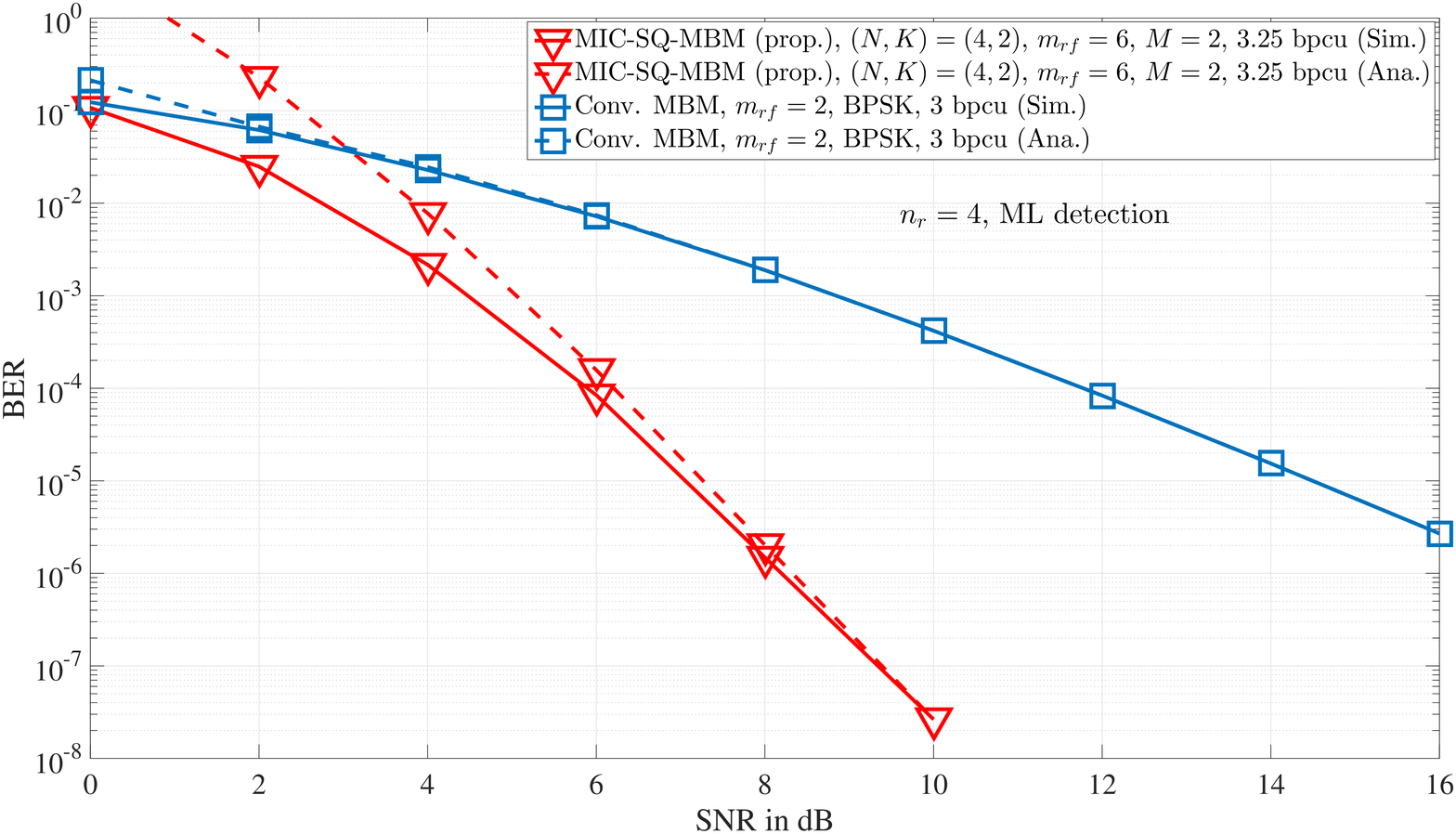}
\caption{{\color{black} BER performance of MIC-SQ-MBM (prop.) signal set 
with rate 3.25 bpcu and conventional MBM signal set with rate 3 bpcu. 
Simulation and analysis.}}
\vspace{-3mm}
\label{fig:ML2} 
\end{figure}

The superior BER performance of the MBM system with the proposed signal 
set is the result of the good distance properties achieved by the 
proposed signal set.  The distance distributions of the proposed signal 
sets and conventional MBM signal sets considered in Figs. \ref{fig:ML1} 
and \ref{fig:ML2} are shown in Tables \ref{dd2bpcu} and \ref{dd3bpcu}, 
respectively. From Table \ref{dd2bpcu}, it can be seen that, the minimum 
distance of the conventional MBM signal set is 2, while that of MIC-SQ-MBM 
(proposed) signal set is 12, which is significantly higher. Further, the 
dominant distance in MIC-SQ-MBM signal set is 16, with 76.32 \% of the 
signal pairs having this distance. A similar observation can be made 
from the Table \ref{dd3bpcu} where the distance distributions of 
conventional MBM signal set of 3 bpcu and MIC-SQ-MBM signal set of 
3.25 bpcu are shown. This explains the superior performance of the 
proposed signal set. At this point, we make few remarks below.

\begin{table}
\begin{subtable}{.5\linewidth}
\centering
\begin{tabular} {|c|c|c|}
\hline 
Distance & \# pairs & \% pairs \\
\hline 
2 & 4  & 66.67 \\ 
\hline 
4 & 2  & 33.33\\ 
\hline  
\end{tabular} 
\caption{MBM}
\end{subtable}%
\begin{subtable}{.5\linewidth}
\centering
\begin{tabular} {|c|c|c|}
\hline 
Distance & \# pairs & \% pairs\\
\hline 
12 & 15360 &  11.742\\ 
\hline 
16 & 99840 & 76.32 \\ 
\hline     
20 & 15360 & 11.742 \\ 
\hline         
32 & 256 & 0.1956 \\ 
\hline  
\end{tabular}
\caption{MIC-SQ-MBM}
\end{subtable} 
\caption{Distance distribution of MIC-SQ-MBM (prop.) signal set of rate 
2.25 bpcu and conventional MBM signal set of rate 2 bpcu, considered in 
Fig. \ref{fig:ML1}.}
\label{dd2bpcu} 
\vspace{-3mm}
\end{table}

\begin{table}
\begin{subtable}{.5\linewidth}
\centering
\begin{tabular} {|c|c|c|}
\hline 
Distance & \# pairs & \% pairs \\
\hline 
2 & 24  & 85.714 \\ 
\hline 
4 & 4  & 14.285\\ 
\hline  
\end{tabular} 
\caption{MBM}
\end{subtable}%
\begin{subtable}{.5\linewidth}
\centering
\begin{tabular} {|c|c|c|}
\hline 
Distance & \# pairs & \% pairs\\
\hline 
12 & 1032192 &  3.0765\\ 
\hline 
16 & 31481856 & 93.8346 \\ 
\hline     
20 & 1032192 & 3.0765 \\ 
\hline         
32 & 4096 & 0.0124 \\ 
\hline  
\end{tabular}
\caption{MIC-SQ-MBM}
\end{subtable} 
\caption{Distance distribution of MIC-SQ-MBM (prop.) signal set of 
rate 3.25 bpcu and conventional MBM signal set of rate 3 bpcu, considered 
in Fig. \ref{fig:ML2}.}
\label{dd3bpcu} 
\vspace{-2mm}
\end{table}

{\subsubsection*{Remark 1}
In Theorem \ref{theorem1} in the previous section, we showed that the 
asymptotic diversity order of MBM with the proposed signal set is $n_r$, 
which is same as that of MBM using conventional MBM signal set. However, 
in Figs. \ref{fig:ML1} and \ref{fig:ML2}, even at BERs as low as $10^{-6}$, 
the slopes of the BER curves for the proposed signal set and conventional 
signal set are different. Specifically, MBM with the proposed signal set 
is seen to exhibit a diversity slope higher than $n_r$. To gain more 
insight into this behavior, we plot the BER upper bounds of MBM using 
the proposed signal set for much lower BER values. The BER upper bounds 
for the previously considered systems in Figs. \ref{fig:ML1} and 
\ref{fig:ML2} are shown in  Figs. \ref{fig:ML1_bound} and 
\ref{fig:ML2_bound}, respectively. In Figs. \ref{fig:ML1_bound} and 
\ref{fig:ML2_bound}, the bounds are plotted up to a BER of $10^{-20}$. 
From these plots, it is evident that, although MBM using the proposed 
signal set initially shows a higher diversity slope, eventually the 
diversity slope becomes the same as that of MBM using conventional MBM 
signal set. For example, in Fig. \ref{fig:ML1_bound}, the curve of MBM 
with the proposed signal set changes the slope at about $10^{-12}$ BER 
and becomes parallel to the corresponding curve for MBM using conventional 
MBM signal set. {\color{black}This behavior can be explained as follows. As 
shown in the diversity analysis in Sec. \ref{sec4}, the slope of the BER 
curve in the asymptotic high SNR regime depends on the minimum rank of the 
difference matrices, $\mathbf{\Delta}^{ij}$s. The diversity analysis also 
showed that the minimum rank is always equal to one for the proposed signal 
set, which resulted in the conclusion that the asymptotic diversity order 
is $n_r$. The proposed signal set reduces the number of rank one difference 
matrices relative to the total number of possible difference matrices, which 
makes the effect of the minimum rank to show up only at much higher SNRs.
For example, for the considered system in Fig. \ref{fig:ML1_bound}, there 
are $2^9$ possible signal matrices $\mathbf{X}$ and hence there are 
$\binom{2^9}{2}=130816$ possible difference matrices. Numerical computation 
of the difference matrices and their ranks reveal that there is only one 
difference matrix which has rank one and all other difference matrices have 
higher ranks. This results in the BER curve to have a slope higher than 
$n_r$ in the low-to-medium SNRs and a change of slope to $n_r$ in the 
high SNR regime (where the diversity slope of $n_r$ manifests due to the 
effect of the presence of one difference matrix of rank one). Likewise, 
out of the $\binom{2^{13}}{2}=33550336$ possible difference matrices in 
the system considered in Fig. \ref{fig:ML2_bound}, only one difference 
matrix has rank one. 
Therefore, the slope change in Fig. \ref{fig:ML2_bound} occurs at a much 
lower probability of error of $10^{-15}$ compared to the slope change at 
a probability of error of $10^{-12}$ in Fig. \ref{fig:ML1_bound}.}
The observations in Figs. \ref{fig:ML1_bound} and \ref{fig:ML2_bound} 
convey the following points: $i)$ the numerical plots of the BER upper 
bound validate asymptotic diversity order of $n_r$ predicted by analysis 
in the previous section, and $ii)$ the proposed signal set achieves higher 
than $n_r$ diversity slope in the practical low-to-moderate SNRs of interest. 

\begin{figure}[t]
\centering
\includegraphics[width=9cm, height=6cm]{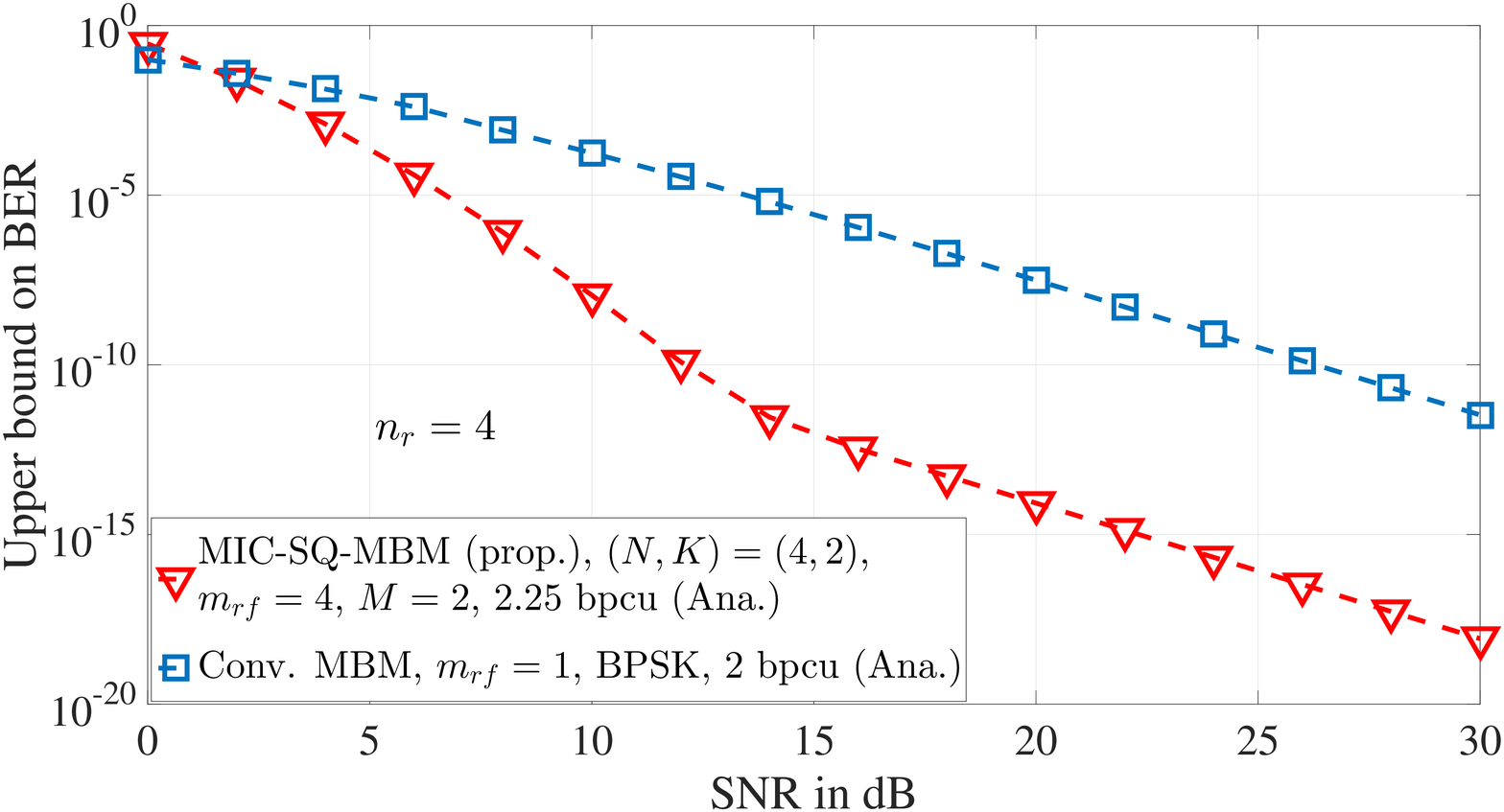}
\caption{BER upper bound plots of MBM using MIC-SQ-MBM (prop.) signal set 
and conventional MBM signal set.}
\label{fig:ML1_bound}
\vspace{-3mm}
\end{figure}

\begin{figure}[t]
\centering
\includegraphics[width=9cm, height=6cm]{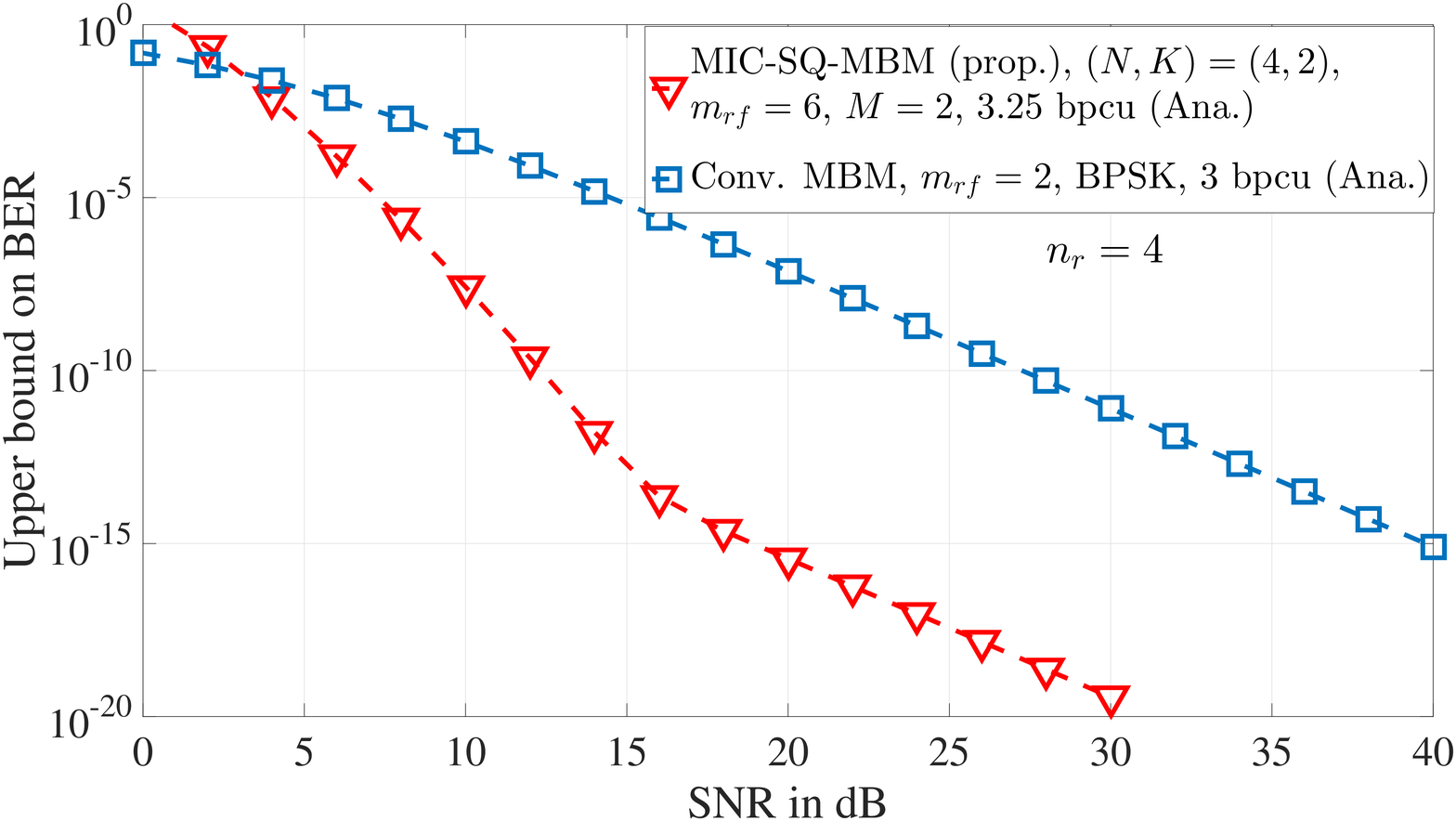}
\caption{BER upper bound plots of MBM using MIC-SQ-MBM (prop.) 
signal set and conventional MBM signal set.}
\label{fig:ML2_bound}
\vspace{-3mm}
\end{figure}

\subsubsection*{Remark 2 (Shortening of RS codes)} In the discussion of
the results of Figs. \ref{fig:ML1} and \ref{fig:ML2}, we mentioned that 
a \textit{shortened} RS code is used for MAP-index coding. In this remark, 
we give a brief account of the shortening of RS codes. An RS code on 
GF$(2^{m_{rf}})$ will have a codeword length of $N=2^{m_{rf}}-1$ 
(which is not the case in Figs. \ref{fig:ML1} and \ref{fig:ML2}). A 
shortened RS code is one in which the codeword length is less than
$2^{m_{rf}}-1$. The shortened $(N,K)$ RS code actually uses an $(N',K')$ 
RS code with $N' = 2^{m_{rf}}-1$ and $K' = K+(N'-N)$. The shortening is 
done by initially padding each message of length $K$ with $N'-N$ 
prepending zeros. RS encoding is done for this zero padded message 
to obtain a codeword with the allowed codeword length of $N'$. Finally,
the padded zeros are removed from the codeword along with puncturing 
some of the parity symbols.  

\begin{figure}[t]
\centering
\includegraphics[width=9cm, height=6cm]{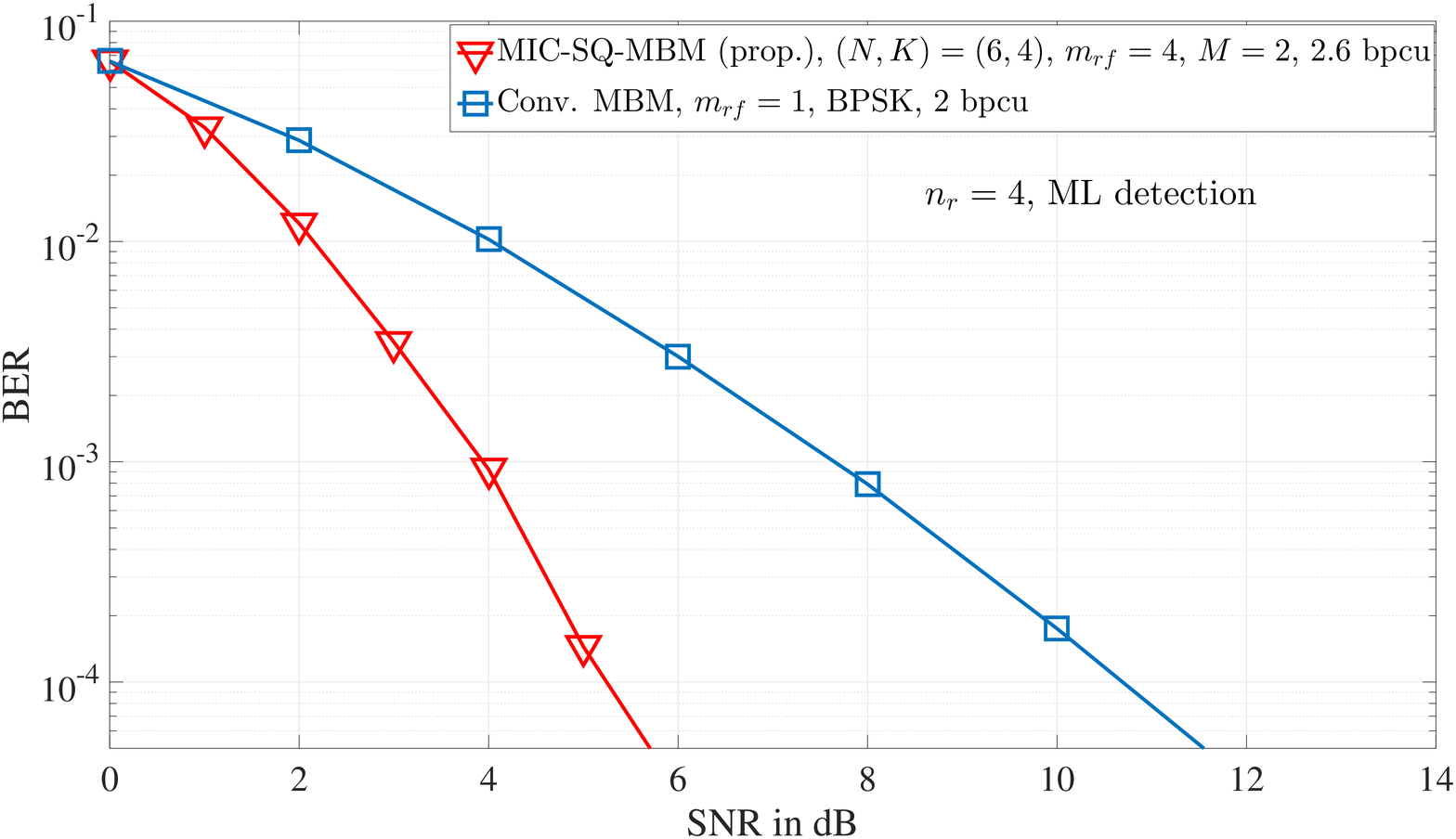}
\caption{{\color{black} BER performance of MIC-SQ-MBM (prop.) signal set 
with $N=6$, $K=4$. Performance of conventional MBM signal set is also 
shown for comparison.}}
\label{fig:ber_N6_K4}
\end{figure}

In the rest of this section, we further illustrate the advantage of the 
proposed signal set. {\color{black} Figure \ref{fig:ber_N6_K4} shows the 
BER performance of the proposed MBM signal set for $N=6$, $K=4$, 
$m_{rf}=4$, $M=2$, 2.6 bpcu, $n_r=4$, and ML detection. The performance 
of conventional MBM signal set with $m_{rf}=1$, BPSK, 2 bpcu, $n_r=4$,
and ML detection is also shown for comparison. From the figure, it can be 
seen that the proposed signal set achieves better performance compared 
to conventional MBM signal set by about 5 dB at $10^{-4}$ BER.}  
We show the BER performance of the proposed 
signal set as a function of the number of receive antennas in 
Fig. \ref{fig:ber_vs_nr}. The considered system uses $N=4$, $K=2$, 
$m_{rf}=6$, $M=2$, and achieves a rate of 3.25 bpcu. The performance 
of this system is plotted as a function of the number of receive 
antennas at two SNR values, namely, 0 dB and 2 dB. The performance 
of conventional MBM signal set is also shown for comparison. From 
Fig. \ref{fig:ber_vs_nr}, it can be seen that the proposed signal set 
requires fewer number of receive antennas compared to conventional MBM 
signal set to achieve the same bit error performance. For example, 
to achieve a BER of $10^{-4}$ at an SNR of 2 dB, the proposed signal 
set requires about 7 receive antennas, while the conventional MBM signal 
set requires 15 receive antennas. It can further be seen that this gap 
in the required number of receive antennas widens as the required BER 
goes down. 

\begin{figure}[t]
\centering
\includegraphics[width=9cm, height=6cm]{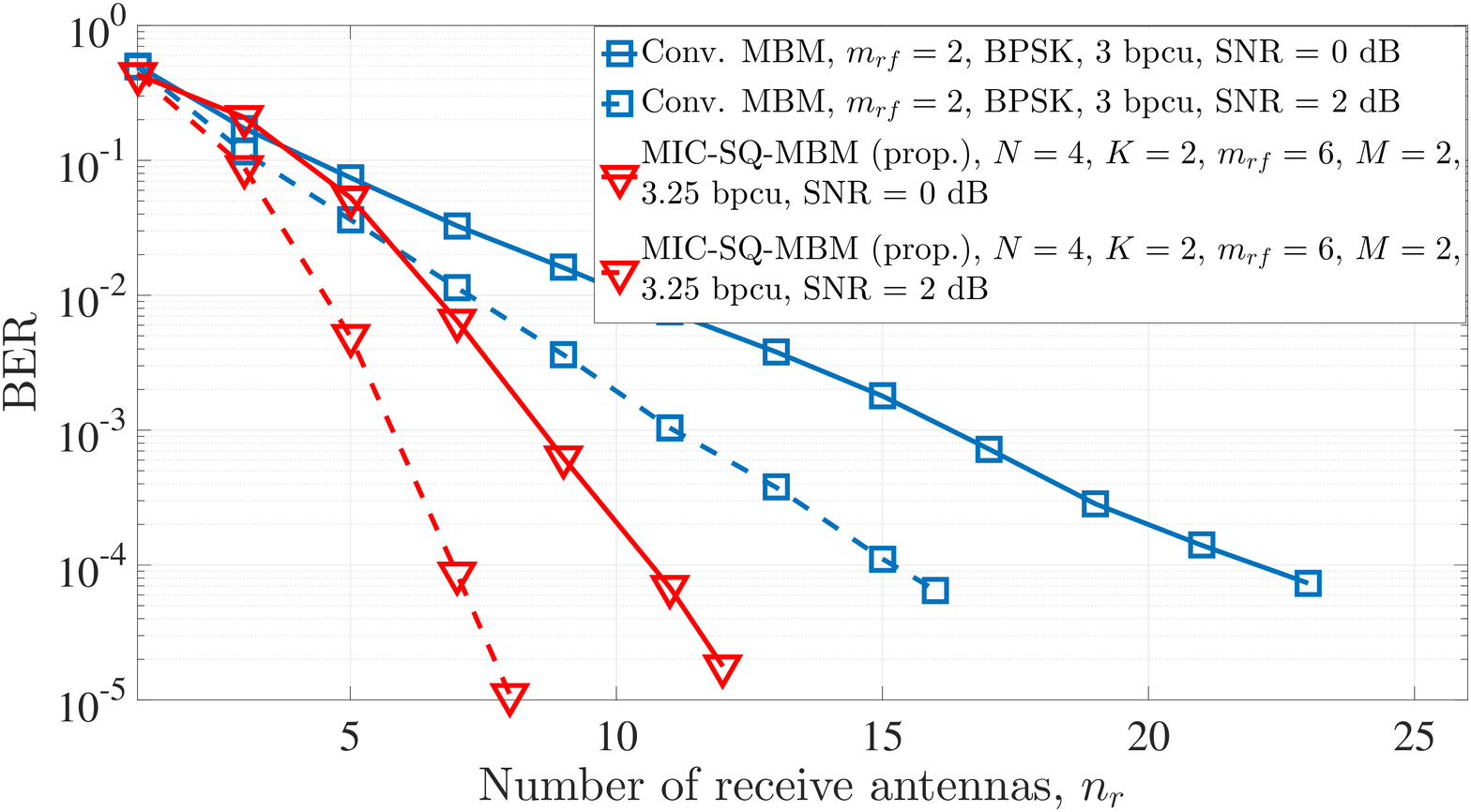}
\caption{BER performance of MBM using MIC-SQ-MBM (prop.) signal set 
as a function of number of receive antennas.}
\label{fig:ber_vs_nr}
\vspace{-3mm}
\end{figure}

In Fig. \ref{fig:snr_vs_nr}, the SNR required to achieve a BER of 
$10^{-3}$ for different number of receive antennas is shown for MBM 
with the proposed signal set (3.25 bpcu) and conventional MBM signal 
set (3 bpcu). It is evident from the figure that, for a given number 
of receive antennas, the proposed signal set achieves a BER of 
$10^{-3}$ at lesser SNR values compared to conventional MBM signal set.  
{\color{black} Figure \ref{fig:large_nr} shows the BER performance of the 
proposed MBM constellation for $n_r=4$ and $n_r=16$. The performance of 
the conventional MBM constellation is also shown for comparison. It can
be seen that the MBM performance (with proposed constellation and
conventional constellation) gets significantly better for $n_r=16$ compared
to that with $n_r=4$. This is in line with the observation in \cite{mbm2a} 
where the benefits of MBM are reported to be more pronounced for a large 
number of receive antennas and, in many cases, the conventional constellation
itself (without any channel coding) offers an acceptably low probability
of error. This can be seen in Fig. \ref{fig:large_nr}, where the conventional 
constellation achieves a BER of $10^{-5}$ at SNRs of about 13.5 dB for 
$n_r=4$ and 2.5 dB for $n_r=16$. Figure \ref{fig:large_nr} further shows 
that with the proposed constellation the same BER of $10^{-5}$ is achieved 
at SNRs of about 6.3 dB for $n_r=4$ and -2.2 dB for $n_r=16$.}

\begin{figure}[t]
\centering
\includegraphics[width=9cm, height=6cm]{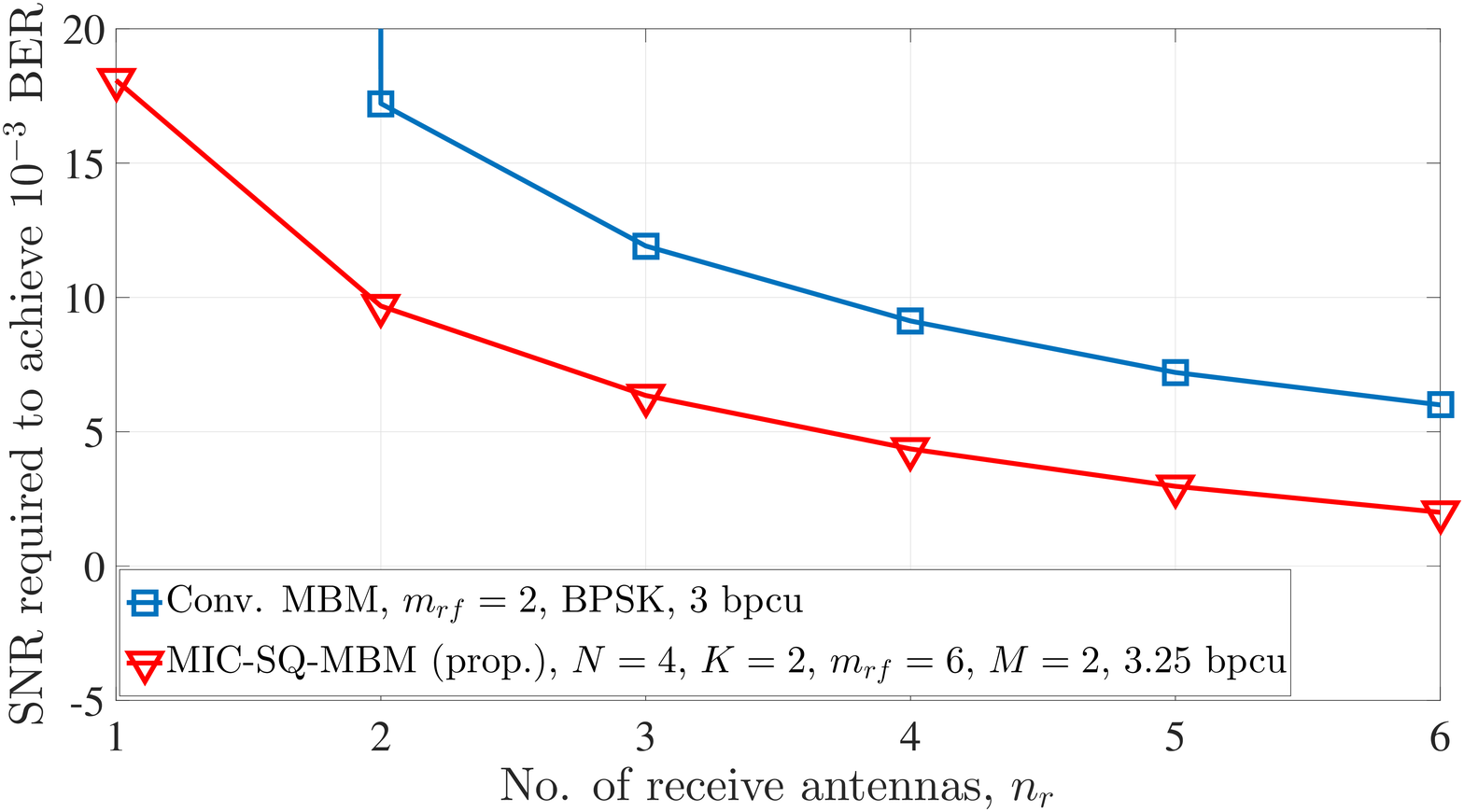}
\caption{SNR required to achieve a BER of $10^{-3}$ as a function of 
number of receive antennas for MBM using MIC-SQ-MBM (prop.) signal set 
and conventional MBM signal set.}
\label{fig:snr_vs_nr}
\end{figure}

\begin{figure}[t]
\centering
\includegraphics[width=9cm, height=6cm]{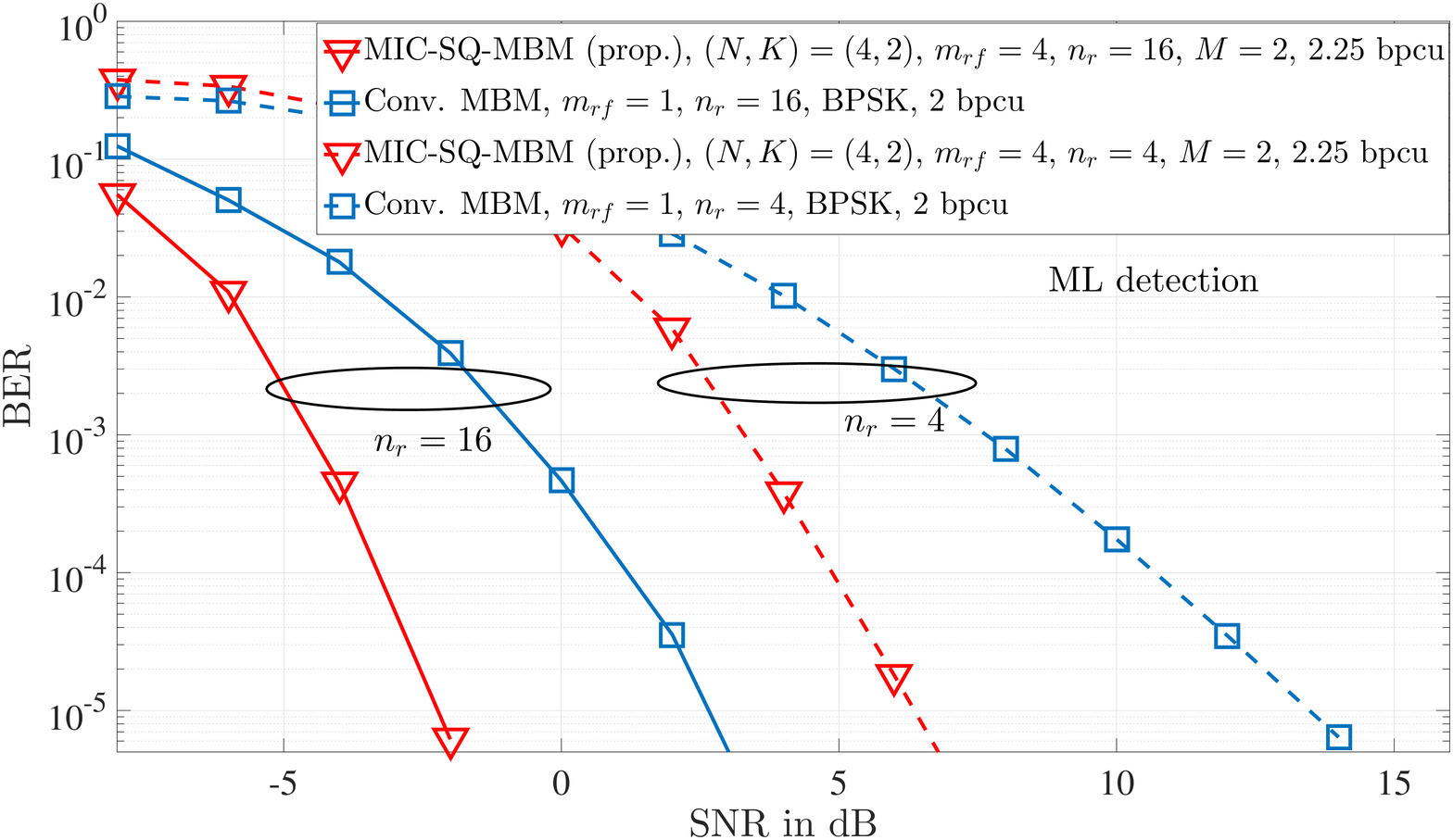}
\caption{\color{black} BER performance of MIC-SQ-MBM (prop.) signal set 
and conventional MBM signal set with $n_r=4,16$. }
\label{fig:large_nr}
\end{figure}

{\color{black} In Fig. \ref{fig:BER_vs_EbNo}, we consider two MBM systems 
using the same number of RF mirrors ($m_{rf}=4$) and show their BER 
performance as a function of $E_b/N_0$. The first system uses the proposed 
constellation with $N=4$, $K=2$, $M=2$, and achieves 2.25 bpcu. The second 
system uses conventional MBM constellation with BPSK modulation and achieves
5 bpcu. The figure shows that the system using the proposed constellation
achieves a BER of $10^{-5}$ at an $E_b/N_0$ of about 2.2 dB, whereas the
system with conventional MBM constellation achieves the same BER at an
$E_b/N_0$ of about 9.6 dB.}
\begin{figure}[t]
\centering
\includegraphics[width=9cm, height=6cm]{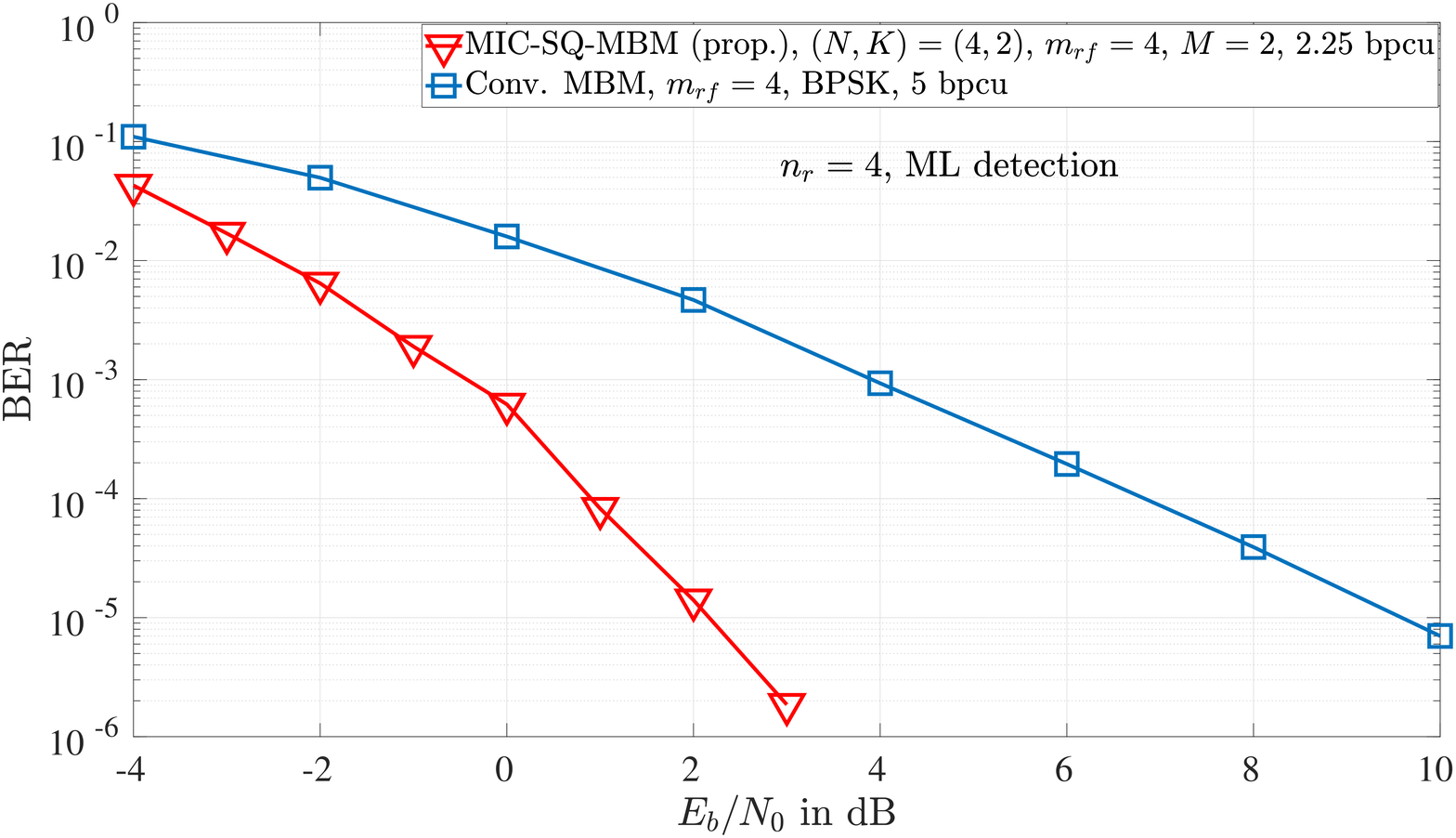}
\caption{{\color{black}BER performance comparison as a function of $Eb/N_0$ 
between the MBM systems using MIC-SQ-MBM (prop.) constellation and 
conventional MBM constellation, when both the systems use $m_{rf}=4$ 
RF mirrors.}}
\label{fig:BER_vs_EbNo}
\end{figure}}

{\color{black}
Figures \ref{fig:STBC1} and \ref{fig:STBC2} show a BER performance
comparison between MBM systems using the proposed constellation and
STBC systems. The MBM and STBC systems considered in the figures are
closely matched in terms of their achieved rates. The considered system
parameters are shown in Figs. \ref{fig:STBC1} and \ref{fig:STBC2}.
In Fig. \ref{fig:STBC1}, the MBM system has a rate of 1.75 bpcu and the
STBC system has a rate of 1.5 bpcu. 
In Fig. \ref{fig:STBC2}, the rates of both MBM and STBC systems are 2.25 
bpcu. From the figures it can be seen that the MBM systems using the 
proposed constellation achieve better BER performance compared to the 
STBC systems. For example, in Fig. \ref{fig:STBC1}, even with a slightly 
higher rate of 1.75 bpcu, the MBM system performs better by about 1 dB at 
$10^{-5}$ BER compared to the STBC system with a rate of 1.5 bpcu. In 
Fig. \ref{fig:STBC2}, for the same rate of 2.25 bpcu, the MBM system 
performs better by about 5.5 dB at $10^{-5}$ BER. It is noted that while 
the considered STBC systems require $n_t=4$ RF chains, the MBM systems 
require only a single RF chain.
}

\begin{figure}[t]
\centering
\begin{subfigure}[b]{0.5\textwidth}
\includegraphics[width=8.5 cm, height=5.5 cm]{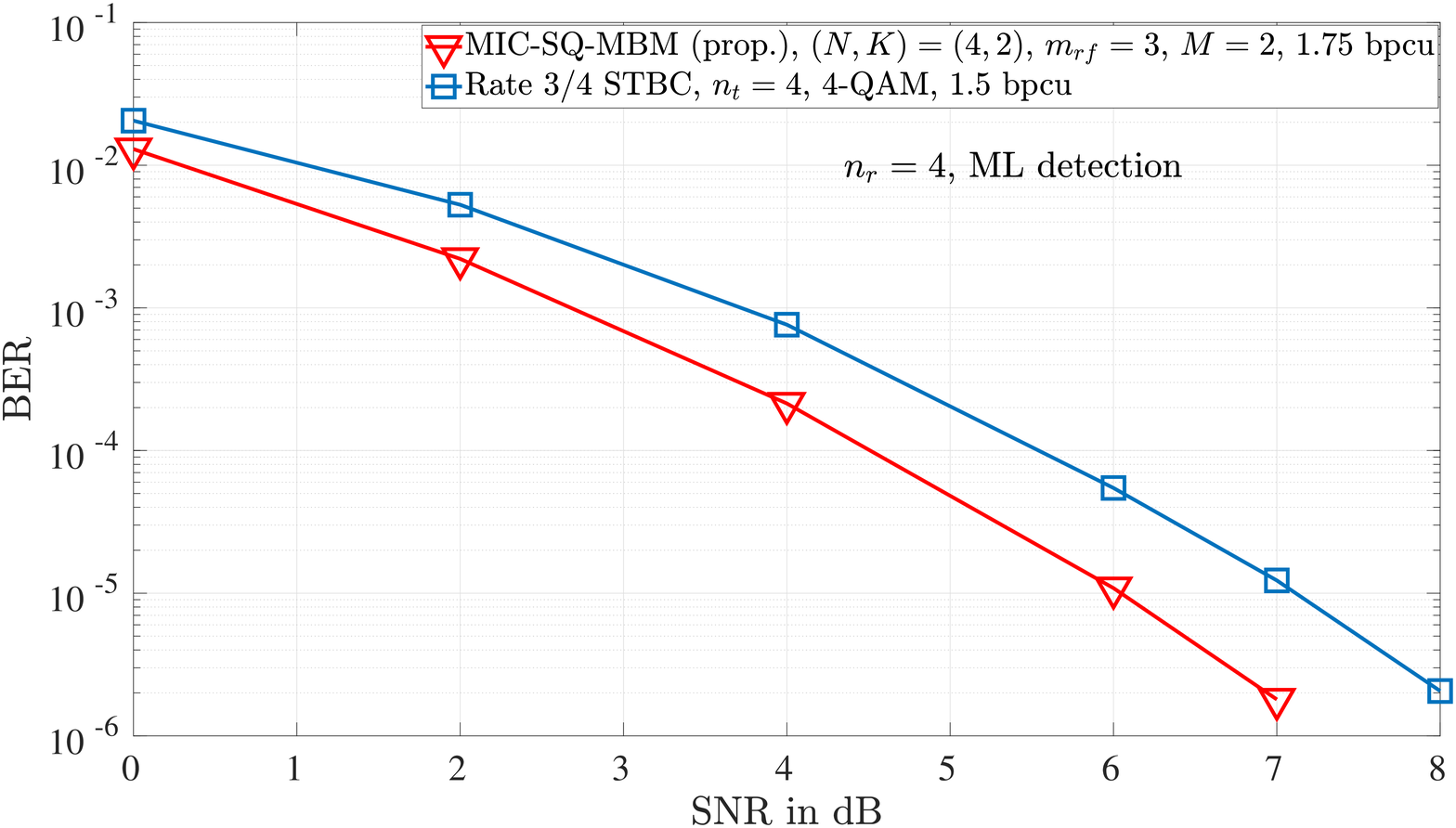}
\caption{}
\label{fig:STBC1}
\end{subfigure}%

\begin{subfigure}[b]{0.5\textwidth}
\includegraphics[width=8.5 cm, height=5.5 cm]{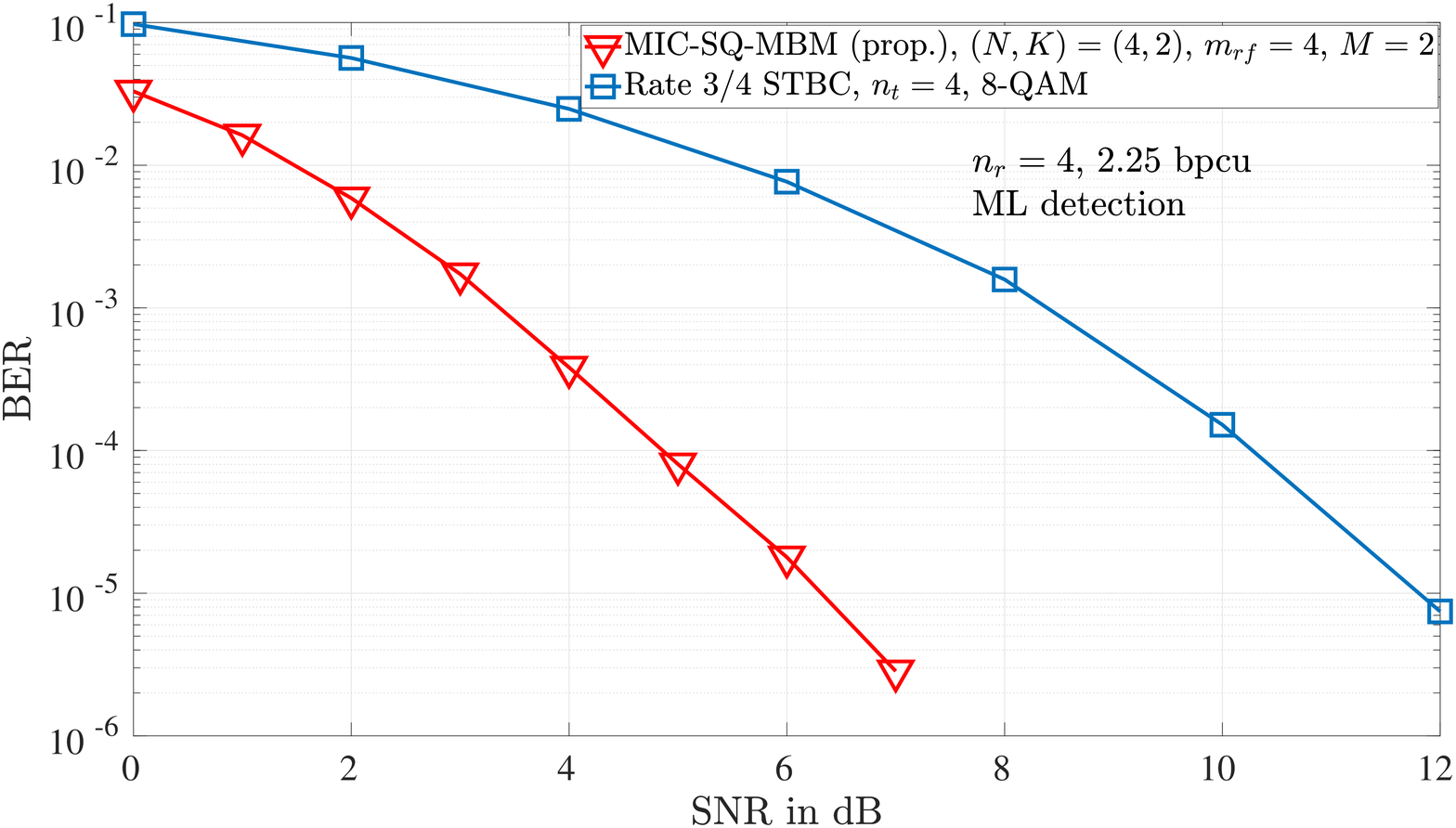}
\caption{}
\label{fig:STBC2}
\end{subfigure}%
\caption{{\color{black} BER performance comparison between MBM systems 
using the proposed constellation and STBC systems.}}
\label{fig:STBC_comparison}
\end{figure}

\section{Conclusions} 
\label{sec6}
We considered the problem of efficient constellation/signal set design 
for media-based modulation, and showed that block codes and squaring 
construction can be effectively used to design MBM signal sets with good 
distance properties. The proposed design approach was shown to result in 
MBM signal sets with good distance properties and bit error performance. 
Numerical and simulation results showed that the proposed MBM signal set 
can lead to significant advantages in terms of SNR and number of receive 
antennas compared to conventional MBM signal set. Through analysis and
validating simulations we established that the asymptotic diversity order
of the proposed signal set is the same as that of conventional MBM signal
set. However, an interesting observation is that in the low-to-medium
SNR regime, the proposed signal set was found to achieve a much higher 
diversity slope compared to that of conventional MBM signal set. This 
has resulted in significant SNR gains (e.g., 7 dB gain at $10^{-5}$ BER) 
compared to conventional MBM signal set. 
{\color{black} We note that the ML detection complexity grows 
exponentially with the block size $N$ and hence exhaustive search 
becomes infeasible for large block sizes. The structured sparsity in the
proposed constellation and the trellis structure of the non-zero symbols
obtained by squaring construction can be exploited to design low-complexity 
signal detection algorithms, which can be a potential topic future work.}

\end{document}